\def\final{1}
\def\pqc{0}
\newcommand{\fnote}[1]{\textbf{\color{red} Note}: #1}
\newcommand{\ftodo}[1]{\textbf{\color{blue} Next: #1}}
\newcommand{\fnote}[1]{}
\newcommand{\ftodo}[1]{}
\def\natural{\mathbb{N}}
\def\({\left(}
\def\){\right)}
\newcommand{\mypar}[1]{\smallskip \noindent {\sc {#1}.}}
\def\calH{\mathcal{H}}
\def\calO{\mathcal{O}}
\def\calS{\mathcal{S}}
\def\veps{\varepsilon}
\def\red{\mathcal{R}}
\def\qred{\hat{\mathcal{R}}}
\def\qintp{\hat{\mathcal{I}}}
\def\trr{\mathcal{T}}
\def\btrr{T} 
\def\qtrr{\hat{\mathcal{T}}}
\newcommand{\qver}[1]{\hat{#1}}
\def\disr{\mathcal{D}}
\def\adv{\mathcal{A}}
\def\badv{\mathcal{B}}
\def\qadv{{\hat{\mathcal{A}}}}
\def\bqadv{{\hat{\mathcal{B}}}}
\def\chr{\mathcal{C}}
\def\extg{{G^{\sf ext}}}
\def\intg{G^{\sf int}}
\def\extgq{\hat G^{\sf ext}}
\def\intgq{\hat G^{\sf int}}
\def\preal{\mathbb{R}^+}
\def\negl{\text{\sf negl}}
\def\game{G}
\def\qgame{\hat G}
\def\succ{\sf succ}
\def\fail{\sf fail}
\def\asucc{\alpha_{\sf succ}}
\def\atime{\alpha_{\sf time}}
\def\qpoly{\mathcal{Q}}
\def\clsa{\mathfrak{A}}
\def\qclsa{{\hat{\mathfrak{A}}}}
\def\clsb{\mathfrak{B}}
\def\qclsb{\hat{\mathfrak{B}}}
\def\gcls{\mathfrak{C}}
\def\sredcls{( \extg(\clsb) , {\trr}, \intg(\clsa))} 
\def\respab{\beta\text{-}(\qclsa,\qclsb)}
\def\transred{$(\beta,\beta')$-$(\qclsa,\qclsb)$}
\def\realize{realizable}
\def\eucma{\text{\sc eu-cma}}
\def\sucma{\text{\sc su-cma}}
\def\owf{\text{\sc owf}}
\def\owfs{\text{\sc owfs}}
\def\ots{\text{\sc ots}}
\def\lots{\text{\sc l-ots}}
\def\wots{\text{\sc w-ots}}
\def\xmss{\text{\sc xmss}}
\def\uowhf{\text{\sc uowhfs}}
\def\ruowhf{\text{\sc r-h}}
\def\uhash{\mathcal{H} =\{h_s\}}
\def\tdp{\text{\sc tdp}}
\def\prg{\text{\sc prg}}
\def\prf{\text{\sc prf}}
\def\kow{\text{\sc kow}}
\def\spr{\text{\sc spr}}
\def\mtree{\text{\sc m-tree}}
\def\ginv{G^{\text{\sc inv}}} 
\def\ginvv{G^{\text{\sc inv'}}} 
\def\gcol{G^{\text{\sc col}}} 
\def\gcolv{G^{\text{\sc col'}}} 
\def\gcolvv{G^{\text{\sc col''}}} 
\def\gfor{G^{\text{\sc for}}}
\def\gforot{G^{\text{\sc ot-for}}}
\def\gforro{G^{\text{\sc ro-for}}}
\def\gforqro{G^{\text{\sc qro-for}}}
\def\gspr{G^{\text{\sc spr}}}
\def\gprf{G^{\text{\sc prf}}}
\def\gtdp{G^{\text{\sc tdp}}}
\def\zkdistg{G^{ZK}_{V^*,\calS}}
\def\qzkdistg{\hat G^{ZK}_{\hat V^*,\hat\calS}}
\def\sign{\text{\sf Sign}}
\def\kgen{\text{\sf KGen}}
\def\vrfy{\text{\sf Vrfy}}
\newcommand{\geqm}[2]{\lbrack #1,#2\rbrack}
\newcommand{\gval}[1]{\omega_{#1}}
\newcommand{\rtime}[1]{\text{\sf TIME}(#1)}
\newcommand{\eqmcls}[2]{E_{#1}(#2)}
\newcommand{\secpar}[1]{\veps_{#1}}
\newcommand{\tadv}[2]{{#1}(#2)}
\newcommand{\respred}[2]{$\beta$-$(#1,#2)$}
\newenvironment{prot}[2]{
\vspace{-2ex}
\begin{figure}[ht!]
\begin{center}
   \begin{tabular}{|ll|}
   \hline
     \hspace{.1ex}\begin{minipage}{.97\linewidth}\vspace{0.5ex}
       {\begin{center}
       {\bf #1} {#2} \end{center}}\vspace{-2ex}
       }{%
       \vspace{-5ex}
       \smallskip
     \end{minipage}& \\
     \hline
   \end{tabular}
\vspace*{1ex}
   \end{center}
\end{figure}
}
\newenvironment{ncprot}[2]{
\begin{figure}[ht!]
\begin{center}
   \begin{tabular}{|ll|}
   \hline
     \hspace{.1ex}\begin{minipage}{.97\linewidth}\vspace{0.5ex}
       {\begin{center}
       {\bf #1} {#2} \end{center}}\vspace{-2ex}
       }{%
       \vspace{-2ex}
       \smallskip
     \end{minipage}& \\
     \hline
   \end{tabular}
   \end{center}
   \vspace{-2ex}
\end{figure}
}
\newtheorem{theorem}{Theorem}[section]
\newtheorem{corollary}[theorem]{Corollary}
\newtheorem{proposition}[theorem]{Proposition}
\theoremstyle{definition}
\newtheorem{definition}[theorem]{Definition}
\begin{document}

\title{A Note on Quantum Security for Post-Quantum Cryptography}

\ifnum\pqc=1
\author{Fang Song}
\institute{Department of Combinatorics \& Optimization \\
and Institute for Quantum Computing\\
University of Waterloo\\
}

\else
\author{Fang Song \\
Department of Combinatorics \& Optimization \\
and Institute for Quantum Computing\\
University of Waterloo\\
}
\fi

\date{}
\maketitle


\ifnum\final=0
To-Do:
\begin{itemize}
	\item FDH in QRO: go over details again 
	\item Abstract: elusive?
	\item Intro: we give an alternative proof
	\item Ginv': another distribution? 
\end{itemize}
\fi

\begin{abstract}
Shor's quantum factoring algorithm and a few other efficient quantum algorithms break many classical crypto-systems. In response, people proposed post-quantum cryptography based on computational problems that are believed hard even for quantum computers. However, security of these schemes against \emph{quantum} attacks is elusive. This is because existing security analysis (almost) only deals with classical attackers and arguing security in the presence of quantum adversaries is challenging due to unique quantum features such as no-cloning. 

This work proposes a general framework to study which classical security proofs can be restored in the quantum setting. Basically, we split a security proof into (a sequence of) classical security reductions, and investigate what security reductions are ``quantum-friendly''.  We characterize sufficient conditions such that a classical reduction can be ``lifted'' to the quantum setting. 

We then apply our lifting theorems to post-quantum signature schemes. We are able to show that the classical generic construction of hash-tree based signatures from one-way functions and and a more efficient variant proposed in~\cite{BDH11} carry over to the quantum setting. Namely, assuming existence of (classical) one-way functions that are resistant to efficient quantum inversion algorithms, there exists a quantum-secure signature scheme. We note that the scheme in~\cite{BDH11} is a promising (post-quantum) candidate to be implemented in practice and our result further justifies it. Actually, to obtain these results, we formalize a simple criteria, which is motivated by many classical proofs in the literature and is straightforward to check. This makes our lifting theorem easier to apply, and it should be useful elsewhere to prove quantum security of proposed post-quantum cryptographic schemes. Finally we demonstrate the generality of our framework by showing that several existing works (Full-Domain hash in the quantum random-oracle model~\cite{Zha12ibe} and the simple hybrid arguments framework in~\cite{HSS11}) can be reformulated under our unified framework.  

\ftodo{Final sentence not satisfying}

\end{abstract}

\section{Introduction}
\label{sec:intro}

Advances in quantum information processing and quantum computing have brought about fundamental challenges to cryptography. Many classical cryptographic constructions are based on computational problems that are assumed hard for efficient classical algorithms. However, some of these problems, such as factoring, discrete-logarithm and Pell's equation, can be solved efficiently on a quantum computer~\cite{Shor97,Hallgren07}. As a result, a host of crypto-systems, e.g, the RSA encryption scheme that is deployed widely over the Internet, are broken by a quantum attacker. 

A natural countermeasure is to use \emph{quantum-resistant} assumptions instead. Namely, one can switch to other computational problems which appear hard to solve even on quantum computers, and construct cryptographic schemes based on them. Examples include problems in discrete lattices~\cite{MR09,Pei09} and hard coding problems~\cite{Sen11}. We can also make generic assumptions such as the existence of one-way functions that no efficient quantum algorithms can invert. This leads to the active research area termed \emph{post-quantum} cryptography~\cite{BBD09}. 
Nonetheless, quantum-resistant assumptions alone do not immediately imply quantum security of a scheme, due to other fundamental issues that could be subtle and easily overlooked.


First of all, we sometimes fail to take into account possible attacks unique to a quantum adversary 
. In other words, classical definition of security may not capture the right notion of security in the presence of quantum attackers\footnote{Although our focus is security against computationally bounded attackers, this issue is also relevant in the \emph{statistical} setting. There are classical schemes, which are proven secure against unbounded classical attackers, broken by attackers using quantum entanglement~\cite{CSST11}.}. For example, many signature schemes are designed in the random-oracle (RO) model, where all users, including the attacker, can query a truly random function. This is meant to capture an idealized version of a hash function, but in practice everyone instantiate it by him/herself with a concrete hash function. As a result, when we consider quantum attacks on these schemes, there seems no reason not to allow a quantum adversary to query the random-oracle in quantum superposition. This leads to the so called quantum random-oracle model~\cite{BDF+11}, in which we need to reconsider security definitions (as well as the analysis consequently)~\cite{Zha12ibe,Zha12prf,BZ13}. 

A more subtle issue concerns security proofs, which may completely fall through in the presence of quantum attacks. Roughly speaking, one needs to construct a \emph{reduction} showing that if an efficient attacker can successfully violate the security requirements of a scheme then there exists an efficient algorithm that breaks some computational assumption. However, a classical reduction may no longer work (or make sense at all) against quantum adversaries. A key classical technique, which encounters fundamental difficulty in the presence of quantum attackers, is called \emph{rewinding}. Loosely speaking, rewinding arguments consist of a mental experiment in which an adversary for a scheme is executed multiple times using careful variations on its input. This usually allows us to gain useful information in order to break the computational assumption. As first observed by van de Graaf~\cite{Graaf97}, rewinding seems impossible with a quantum adversary since running it multiple times might modify the entanglement between its internal state and an outside reference system, thus changing the system's overall behavior. This issue is most evident in cryptographic protocols for zero-knowledge proofs and general secure computation. There has been progress in recent years that develops quantum rewinding techniques in some special cases~\cite{Wat09,Unr12pok}, and a few classical protocols are proven quantum-secure~\cite{DL09,LN11,HSS11}. Hallgren et al.~\cite{HSS11} also formalized a family of classical security proofs against efficient adversaries that can be made go through against efficient quantum adversaries under reasonable computational assumptions. Despite these efforts, however, still not much is known in general about how to make classical security proofs go through against quantum attackers.  

This note revisits these issues for post-quantum cryptography based on computational assumptions, focusing on simple \emph{primitives} such as signatures, encryptions and identifications, where constructions and analysis are usually not too complicated (compared to secure computation protocols for example). In this setting, the issues we have discussed seem less devastating. For instance, rewinding arguments appear only occasionally, for example in some lattice-based identification schemes~\cite{Lyu08,Lyu09}. Usually rewinding is not needed for the security proof. Nonetheless, it is still crucial to pinning down proper security definitions against quantum attacks, as illustrated in the quantum random-oracle example above. In addition, just because there are no rewinding arguments, does not mean that we can take for granted that the security reduction automatically holds against quantum attackers. Very often in the literature of post-quantum cryptography, a construction based on some quantum-resistant assumption is given together with a security proof for \emph{classical} attackers only. The construction is then claimed to be quantum-secure without any further justification. In our opinion, this is not satisfying and quantum security of these schemes deserves a more careful treatment. 


\mypar{Contributions} The main contribution of this note is a general framework to study which classical security proofs can be restored in the quantum setting. A security proof can be split into (a sequence of) classical security reductions, and we investigate what reductions are ``quantum-friendly''. Recall that informally a reduction transforms an adversary of one kind to another. We distinguish two cases, \emph{game-preserving} and \emph{game-updating} reductions. 

A game-preserving reduction is one such that the transformation still makes sense, i.e., syntacticly  well-defined, for quantum adversaries. In this case we propose the notion of \emph{class-respectful} reductions which ensures in addition that the adversary obtained from the transformation indeed works (e.g., it is an efficient quantum algorithm and successfully solves some problem). Motivated by the structure of security reductions that occur in many post-quantum cryptographic schemes, we further characterize a simple criteria, which is straightforward to check. This makes the lifting theorem easier to apply, and should be useful to prove quantum security for many other schemes not restricted to the applications we show later in this note.  

On the other hand, a {game-updating} reduction captures the case that the classical reduction no longer makes sense, as illustrated by the quantum random-oracle model. This is usually more difficult to analyze. We propose \emph{translatable} reductions, which essentially reduces the problem to the game-preserving case. The basic idea is to introduce an ``interpreter'', so that the classical reduction becomes applicable to a quantum adversary with the translation by the interpreter. In both cases, we show in our lifting theorems that a reduction can be constructed in the quantum setting if there is a classical reduction that is {respectful} or {translatable} respectively. 

We apply our framework to prove quantum security of some hash-based signature schemes. Specifically, we show that the classical generic construction of hash-tree based signature schemes from one-way functions carries over to the quantum setting, assuming the underlying one-way function is quantum-resistant. This is also true for a more efficient variant proposed in~\cite{BDH11} assuming quantum-resistant pesudorandom functions, which in turn can be based on quantum-resistant one-way functions from known results. This scheme is a promising (post-quantum) candidate to be implemented in practice and our result further justifies it. Moreover, we give an alternative proof for the security of a general construction of signatures based on trapdoor permutations called Full-Domain hash in the quantum random-oracle model. We also show that an existing framework in the context of cryptographic protocols that characterizes a class of ``quantum-friendly'' classical security proofs (simple hybrid augments~\cite{HSS11}) fits our framework. These demonstrate the generality of our framework. 

\mypar{Remarks} Our framework (e.g., definitions of games and reductions) should look natural to people familiar with the provable-security paradigm. It should also be straightforward (or even obvious for experts) to verify the characterizations of ``quantum-friendly'' reductions in our lifting theorems. The purpose of this note, however, is to at least make people become more serious and cautious and to encourage further research, in addition to suggesting one possible formal framework to reason about the security of post-quantum cryptography against quantum attacks. Likewise, it may be just a tedious exercise to work though the classical proof for hash-based signatures and convince oneself it is indeed quantum-secure. Nonetheless, this can be done in a more abstract and rigorous way using our framework. We hope that our framework can be applied elsewhere to analyze quantum security of other post-quantum cryptographic constructions. Ideally, in some easy cases, it would serve as a tool to automate the routine parts, so that whoever designs a new scheme should be able to make some simple checks and conclude its quantum security.

\mypar{Other Related Works} There are a few works that study systematically what classical proofs or statements can be ``lifted'' to the quantum setting in the context of multi-party secure computation. Unruh in~\cite{Unruh10} showed that any classical protocol that is secure in the statistical setting, i.e., against computationally \emph{unbounded} adversaries, under a strong \emph{universal-composable} notion is also statistically secure in an analogous quantum universal-composable model. Fehr et al.~\cite{FKSZZ13} considered \emph{reducibility} between two-party cryptographic tasks in the quantum setting. For example, one can ask if there is a secure protocol for oblivious transfer assuming two parties can perform bit commitments securely. They showed that in most cases, the reducibility landscape remains unchanged in the quantum setting under the very same classical protocols. However, there are cases that classical reducibility no longer holds quantumly, and sometimes new relations can be established using quantum protocols. 

The formalization of games, reductions and other terms in this note is influenced by a lot of classical literatures on game-playing proofs~\cite{GM84,Yao82,KR01,BR06,Shoup05,Halevi05}. Recent developments, especially the framework of code-based game-playing proofs~\cite{Halevi05,BR06} have motivated automated tools for proving security~\cite{Blanchet08,BGZ09,Stump09,BGHZ11}. Our treatment of computational assumptions is also inspired by the line of works classifying complexity-theoretic intractability assumptions~\cite{Naor03,HH09,Pass11,GW11}.

\section{Preliminary}
\label{sec:prelim}

\mypar{Basic notations} For $m \in  \natural$, $[m]$ denotes the set $\{1, \ldots, m\}$. We
use $n \in \mathbb{N}$ to denote a {\em security parameter}. The
security parameter, represented in unary, is an implicit input to
all cryptographic algorithms; we omit it when it is clear from the
context. Quantities derived from protocols or algorithms
(probabilities, running times, etc)  should be thought of as
functions of $n$, unless otherwise specified. A function $f(n)$ is
said to be negligible if $f = o(n^{-c})$ for any constant $c$, and
$\negl(n)$ is used to denote an unspecified function that is
negligible in $n$. We also use $poly(n)$ to denote an unspecified
function $f(n) = O(n^c)$ for some constant $c$. When $D$ is a
probability distribution, the notation $x \gets D$ indicates that
$x$ is a sample drawn according to $D$. When $D$ is a finite set, we
implicitly associate with it the uniform distribution over the set.
If $D(\cdot)$ is a probabilistic algorithm, $D(y)$ denotes the
distribution over the output of $D$ corresponding to input $y$. We
will sometimes use the same symbol for a random variable and for its
probability distribution when the meaning is clear from the context.
Let $\mathbf{X} = \{X_n\}_{n\in \natural}$ and $\mathbf{Y} =
\{Y_n\}_{n \in \natural}$ be two ensembles of binary random
variables. We call $\mathbf{X,Y}$ {\em indistinguishable}, denoted
$\mathbf{X} \approx \mathbf{Y}$, if $ \left|\Pr(X_n = 1) - \Pr(Y_n =
1)\right| \leq \negl(n)$.

\mypar{Machine Models} We model classical parties as interactive Turing machines, which are probabilistic polynomial-time ({\sf PPT}) by default. 
Quantum machines are modelled following that of~\cite{HSS11}.  A
{\em quantum interactive machine} (QIM) $M$ is an ensemble of
interactive circuits $\{M_n\}_{n\in \natural}$. For each value $n$
of the security parameter, $M_n$ consists of a sequence of circuits
$\{M_n^{(i)}\}_{i=1,...,\ell(n)}$, where $M_n^{(i)}$ defines the
operation of $M$ in one round $i$ and $\ell(n)$ is the number of
rounds for which $M_n$ operates (we assume for simplicity that
$\ell(n)$ depends only on $n$). We omit the scripts when they are
clear from the context or are not essential for the discussion. $M$
(or rather each of the circuits that it comprises) operates on three
registers: a state register {\sf S} used for input and workspace; an
output register {\sf O}; and a network register {\sf N} for communicating
with other machines. 
The size (or running time) $t(n)$ of $M_n$ is the sum of the sizes
of the circuits $M_n^{(i)}$. We say a machine is polynomial time if
$t(n)=poly(n)$ and
there is a deterministic classical Turing machine that computes the
description of $M_n^{(i)}$ in polynomial time on input $(1^n,1^i)$.
When two QIMs $M$ and $M'$ interact, their network register {\sf N} is
shared. The circuits $M_n^{(i)}$ and ${M'}_n^{(i)}$ are executed
alternately for $i=1,2,...,\ell(n)$. When three or more machines
interact, the machines may share different parts of their network
registers (for example, a private channel consists of a register
shared between only two machines; a broadcast channel is a register
shared by all machines). The order in which machines are activated
may be either specified in advance (as in a synchronous network) or
adversarially controlled.


\section{Defining Games and Reductions}
\label{sec:defs}
This section introduces a formal definition of reductions, which captures the type of security reductions that we care mostly about. It builds upon a basic notion of 
games. 

We use \emph{game} $G$ to denote a general probabilistic process between two players: the challenger $\chr$ initiates the interaction with the other player, call it an adversary $\adv$. After several rounds of communication, $\chr$ outputs one bit $\succ$ or $\fail$ indicting success or failure of the game. We define the game value of $G$ with an adversary $\adv$ to be the probability that $\chr$ outputs $\succ$, and denote it $\gval{G}(\adv)$. Typically in a game $G$, $\chr$ is efficient, i.e., a poly-time classical or quantum machine. Very often we want to analyze the game when the adversary is restricted to a class of machines $\gcls$ (e.g., poly-time classical machines). We write $G(\gcls)$ to indicate this case, and define $\gval{\game}(\gcls): = \text{max} \{ \gval{\game}(\adv): {\adv\in\gcls}\}$.  Sometimes we denote $\qgame$ to stress a game defined for quantum machines. 
We describe below as an example the standard forgery game of existentially unforgeable signatures under (adaptive) chosen-message-attacks (\eucma)~\cite{GMR88,KL07}. 

\begin{ncprot}{Existential-Forgery Game}{$\gfor$}
{\sf Signature scheme}: $\Pi = (\kgen,\sign,\vrfy)$.
\begin{itemize}
	\item $\chr$ generates $(pk,sk) \gets \kgen(1^n)$. Send $pk$ to adversary $\adv$. 
	\item $\adv$ can query signatures on messages $\{m_i\}$. $\chr$ returns $\sigma_i : = \sign(sk,m_i)$. These messages can be chosen adaptively by $\adv$. 
	\item $\adv$ outputs $(m^*,\sigma^*)$. If $\vrfy(pk,(\sigma^*,m^*)) = 1$ and $m^*\notin\{m_i\}$, $\chr$ outputs $\succ$. Otherwise output $\fail$.  
\end{itemize}
\end{ncprot}

There are many variants of this game which will be used later in this note. For example, we denote the game in which $\adv$ is allowed to query at most one signature $\gforot$. $\gforro$ denotes the game where a random-oracle is available to both parties, and if the random-oracle can be accessed in quantum superposition we denote the game $\gforqro$. 


We define a reduction $\red$ as a 3-tuple $(\extg, \trr, \intg)$. There are an external (explicit) game $\extg$ and an internal (implicit) game $\intg$, and an additional party $\trr$ called the \emph{transformer}. 
Loosely speaking, $\trr$ transforms an adversary $\adv$ in $\intg$ to an adversary in $\extg$.  
Specifically, $\trr$ takes an adversary's machine $\adv$ as input and outputs the description of an adversary in $\extg$. We distinguish \emph{black-box} and \emph{non-black-box} reductions, with a focus on black-box reductions. In a black-box reduction, $\adv$ is provided as a black-box, which means that the transformation does not look into the codes and inner workings of the adversary. Whereas in a non-black-box reduction, $\red$ has the explicit description of $\adv$. We denote $\trr(\adv)$ as the resulting adversary in $\extg$ that is ``transformed'' from $\adv$ by $\trr$. In the black-box setting, the output of $\trr$ will always be of the form $\btrr^{\adv}$, i.e., an oracle machine with access to $\adv$. Note that $\btrr$ is the same for all $\adv$, and it emulates an execution of $\intg$ with $\adv$. However, in general $\btrr$ needs not to run the game as in a real interaction. For instance, it can stop in the middle of the game and start over (i.e., rewind).   

\fnote{insert a picture}

\mypar{Properties of a reduction} 
To make a reduction meaningful, we describe below a few properties that we may want a reduction to hold. Let $\clsa$ and $\clsb$ be two classes of machines. 

\begin{itemize}
	\item \textbf{$\clsa$-compatible} reductions. We say $\red$ is $\clsa$-compatible, if $\forall \adv\in\clsa$, $\intg(\adv)$ and $\extg(\tadv{\trr}{\adv})$ are well defined. Namely $\clsa$ and $\trr(\adv)$ respect the specifications of the games.  
	\item \textbf{$(\clsa,\clsb)$-consistent} reductions. 
We say $\red$ is $(\clsa,\clsb)$-consistent, if $\red$ is $\clsa$-compatible and $\forall \adv \in \clsa$, $\trr(\adv) \in \clsb$. 
 When we write a reduction as $(\extg(\clsb), \trr, \intg(\clsa))$ or $\red(\clsa,\clsb)$ in short, the reduction is assumed to be $(\clsa,\clsb)$-consistent. Note that if $\red$ is black-box, it must hold that $\btrr^\clsa\subseteq\clsb$.
	\item \textbf{Value-dominating}. We say $\red$ is {value-dominating} if $\gval{\extg}(\tadv{\trr}{\adv}) = \gval{\extg}(\tadv{\trr}{\badv})$ whenever $\gval{\intg}(\adv) = \gval{\intg}(\badv)$. 
	\item \textbf{$(\asucc,\clsa)$-effective} reductions. Let $\asucc: \preal \to \preal$ be some function. We say $\red$ is $\asucc$-effective on $\adv$ if $\gval{\extg}(\trr(\adv)) \geq \asucc(\gval{\intg}(\adv))$. If this holds for any $\adv\in\clsa$, we call $\red$ $(\asucc,\clsa)$-effective
	\item \textbf{($\atime,\clsa$)-efficient} reductions.  Let $\atime: \preal \to \preal$ be some function. We say $\red$ is $\atime$-efficient if $\rtime{\trr(\adv)} \leq \atime(\rtime{\adv})$ for any $\adv\in \clsa$.  
	
\end{itemize}
Effective and efficient reductions are often used in combination, especially when we are concerned with tightness of a reduction. In that case, $\asucc$ and $\atime$ may depend on both $\rtime{\adv}$ and $\gval{\intg}(\adv)$.  This paper will focus on effectiveness only. We often abuse notation and use $\asucc$ as a scalar if this causes no confusion. We stress that these properties talk about the output machine of $\trr$ on $\adv$ (e.g., $\trr(\adv)$ lies in a specific class, or it runs in time comparable to that of $\adv$),  however we do not restrict the computational power of $\trr$, though it is typically efficient. The reason is that for our purpose, we only need to show existence of an adversary for $\extg$ with nice properties.

\section{Quantum-Friendly Security Reductions: A General Framework}
\label{sec:qfred}

In this section, we attempt to propose a general framework to study which classical proofs still hold when the adversaries become quantum. Consider a classical cryptographic scheme $\Pi$. To analyze its security against efficient classical attacks (in the provable-security paradigm), one typically proceeds as follows: 

\begin{enumerate}
	\item Formalizing some security requirement by a game $\intg$. Typically we are concerned about security against a particular class of attackers (e.g., PPT machines), so we restrict the game $\intg$ to a class $\clsa$. We also associate a value $\veps_\clsa \in (0,1]$ with the game, which upper bounds the success probability that any adversary in $\clsa$ wins the game. Namely we require that $\gval{\intg}(\clsa) \leq \veps_\clsa$. We denote this security requirement as $(\intg(\clsa),\veps_\clsa)$\footnote{Sometime we write $(\intg(\clsa),\veps_\clsa)_\Pi$ to emphasize the specific scheme we are dealing with, though it is usually clear from the context.}. 
	\item Formalizing some computational assumption 
 by another game $\extg$. Similarly the assumption is assumed to hold against a specific class of machines, so we restrict the game to a class $\clsb$, and require that $\gval{\extg}(\clsb) \leq \secpar{\clsb} \in (0,1]$. Denote the computational assumption as $(\extg(\clsb),\secpar{\clsb})$. 
	\item Constructing an $(\clsa,\clsb)$-consistent reduction $\red=(\extg(\clsb), \trr, \intg(\clsa))$. Security follows if the reduction is in addition $\asucc$-effective with $\asucc \geq \secpar{\clsb} /{\secpar{\clsa}}$. This implies if there exists an $\adv\in\clsa$ with $\gval{\intg}(\adv) > \secpar{\clsa}$ (i.e.. $\adv$ breaks the security requirement), there is an adversary $\trr(\adv) \in \clsb$ such that $\gval{\extg}(\tadv{\trr}{\adv}) \geq \asucc\cdot\gval{\intg}(\adv)  > \secpar{\clsb}$ (i.e., it breaks the computational assumption). 
\end{enumerate}

Now we want to know if the classical security reductions are ``quantum-friendly'' so that we can claim that the scheme is secure against quantum attacks. 
We need to reconsider each step of the classical analysis in the quantum setting (See Table~\ref{tab:compqps} for a comparison between classical provable-security and quantum provable-security for a scheme.). Let $(\qclsa,\qclsb)$ be two classes of quantum machines.  We adapt $\intg$ and define $(\intgq(\qclsa),\secpar{\qclsa})$. It is supposed to 
capture some security requirement against quantum attackers in $\qclsa$, and we require that $\gval{\intgq}(\qclsa) \leq \secpar{\qclsa}$. Likewise, we adapt $\extg$ to a game $\extgq$, which should formalize a reasonable computational assumption $(\extgq(\qclsb), \secpar{\qclsb})$ against quantum adversaries. Then we can ask the fundamental question (still informal): 

\begin{center}
\emph{Can we ``lift'' $\red$ to the quantum setting?\\
 Namely, is there a reduction $\qred(\qclsa,\qclsb)$ that preserves similar properties as $\red(\clsa,\clsb)$}? 
\end{center}

To answer this question, we distinguish two cases. In the simpler case, $\qgame$ are syntactically identical to $\game$. Namely, $\extgq(\qclsb)$ (resp. $\intgq(\qclsa)$) is just $\extg$ (resp. $\intg$) restricted to the quantum class $\qclsb$ (resp. $\qclsa$). In particular, this means that $\extg$ and $\intg$ are still the right games that capture a computational assumption and some security requirement. We call this case \emph{game-preserving}. In contrast, as illustrated by the quantum random-oracle example, $\qgame$ may change and this leads to a more complicated case to analyze. We call it \emph{game-updating}. In the following subsections, we investigate in each case what reductions can be lifted to the quantum setting, and hence are quantum-friendly. 

\begin{table}[htdp]
\caption{Components of classical and quantum provable-security for a classical construction.}
\begin{center}
\begin{tabular}{|c|c|c|}
\hline
& Classical Provable-Security & Quantum Provable-Security \\
\hline
Security Requirement & $(\intg(\clsa),\secpar{\clsa})$ &  $(\intgq(\qclsa), \secpar{\qclsa})$ \\
\hline
Computational Assumption & $(\extg(\clsb), \secpar{\clsb})$ &  $(\extgq(\qclsb), \secpar{\qclsb})$ \\
\hline
Reduction & $\red(\clsa,\clsb)$  & $\stackrel{?}{\longrightarrow} \qred(\qclsa,\qclsb)$ \\
\hline
\end{tabular}
\end{center}
\label{tab:compqps}
\end{table}%

\subsection{Lifting Game-Preserving Reductions}
\label{ssec:gpr}

Let $\red(\clsa,\clsb)=\sredcls$ be a classical reduction. Let $\extgq(\qclsb)$ and $\intgq(\qclsa)$ be extended games in the quantum setting that are restricted to classes of quantum machines $\qclsb$ and $\qclsa$. We consider the case that $\qgame$ and $\game$ are the same in this section. We want to know if there is a reduction $\qred(\qclsa,\qclsb)$ that preserves nice properties of $\red$. Since we are dealing with the same games applied to different classes of machines, one may expect that simple tweaks on $\red$ should work. This intuition is indeed true to some extend, which we formalize next. 

\begin{definition}[$G$-equivalent machines] 
Two machines $M$ and $N$ are called $G$-equivalent if $\gval{G}(M) \equiv \gval{G}(N)$.  
\label{def:respred}
 \end{definition} 
 
\begin{definition}[$\geqm{G}{\gcls}$-realizable classical machines] 
A classical machine $M$ is called $\geqm{G}{\gcls}$-\realize, if there is a machine $N \in \gcls$ s.t. $\gval{G}(M) = \gval{G}(N)$. We denote $\eqmcls{G}{\gcls}$ as the collection of classical machines that are $\geqm{G}{\gcls}$-\realize. 
\label{def:respred}
 \end{definition} 

We put forward \emph{class-respectful} reductions as a template for quantum-friendly reductions in the game-reserving case. 

\begin{definition}[$\beta$-$(\qclsa,\qclsb)$-respectful reductions]
Let $\red$ be a classical reduction \\
$\sredcls$. We say $\red$ is $\beta$-$(\qclsa,\qclsb)$-respectful for some $\beta\in\preal$ if the following hold:

\begin{enumerate}
	\item \textbf{$(\beta,\qclsa)$-extendable}: $\red$ is $\eqmcls{\intg}{\qclsa}$-compatible and $(\beta,\eqmcls{\intg}{\qclsa})$-effective. That is, 
$\forall \adv \in \eqmcls{\intg}{\qclsa}$, $\extg(\tadv{\trr}{\adv})$ and $\intg(\adv)$ are well-defined\footnote{Most classical games we deal with are actually well-defined for all machines. But we explicitly state this requirement in case of some artificial examples.}, and $\gval{\extg}(\tadv{\trr}{\adv}) \geq \beta (\gval{\intg}(\adv))$. 
	\item \textbf{$(\qclsa,\qclsb)$-closed}: $\red$ is $(\eqmcls{\intg}{\qclsa},\eqmcls{\extg}{\qclsb})$-consistent. Namely, $\forall \adv \in E_{\intg}(\qclsa)$, $\tadv{\trr}{\adv} \in 
\eqmcls{\extg}{\qclsb}$. 
\end{enumerate}
\label{def:respred}
 \end{definition} 
 
 The theorem below follows (almost) immediately from this definition. 
 
 \begin{theorem}[Quantum lifting for game-preserving reductions]
If $\red(\clsa,\clsb)$ is $\beta$-$(\qclsa,\qclsb)$-respectful, then there exists an $(\qclsa,\qclsb)$-consistent reduction ${\qred}(\qclsa,\qclsb) :=$ \\
$(\extg(\qclsb),\qtrr,\intg(\qclsa))$ that is 
$(\beta,\qclsa)$-effective.
 \label{thm:ql-gpr}
 \end{theorem}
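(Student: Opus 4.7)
The plan is to directly unfold the definitions and chain together the two clauses of the respectful property; no heavy machinery is needed, since the game-preserving hypothesis means $\extg$ and $\intg$ in the quantum reduction are literally the same games as classically. Given any $\qadv \in \qclsa$, set $v := \gval{\intg}(\qadv)$. My first step would be to exhibit a classical witness $\adv \in \eqmcls{\intg}{\qclsa}$ with $\gval{\intg}(\adv) = v$: one can take the trivial randomized classical machine that executes an always-winning (unbounded) strategy with probability $v$ and an always-losing strategy with probability $1-v$. Because $\qadv$ itself certifies that $v$ is attainable by some element of $\qclsa$, such $\adv$ lies in $\eqmcls{\intg}{\qclsa}$ by definition.

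Next I would invoke the $(\beta,\qclsa)$-extendable clause of respectfulness to conclude that $\trr(\adv)$ is well-defined and satisfies $\gval{\extg}(\trr(\adv)) \geq \beta \cdot v$. Then I would invoke the $(\qclsa,\qclsb)$-closed clause, which says $\trr(\adv) \in \eqmcls{\extg}{\qclsb}$, so by the realizability definition there exists some $\qadv^{\ast} \in \qclsb$ with $\gval{\extg}(\qadv^{\ast}) = \gval{\extg}(\trr(\adv)) \geq \beta \cdot v$. I define $\qtrr(\qadv) := \qadv^{\ast}$. This yields $\qred := (\extg, \qtrr, \intg)$; since $\qtrr(\qadv) \in \qclsb$ and the games are syntactically unchanged, $\qred$ is immediately $(\qclsa,\qclsb)$-consistent, and the chain of inequalities above witnesses $(\beta,\qclsa)$-effectiveness, as required.

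The main obstacle I anticipate is conceptual rather than computational: one must be satisfied that every value $v$ attainable by a machine in $\qclsa$ under $\intg$ is also realized by some classical (not necessarily efficient) machine, so that $\eqmcls{\intg}{\qclsa}$ is rich enough to supply a witness for every $\qadv$. Under the mild structural assumption that $\intg$ admits a trivially winning unbounded strategy and a trivially losing strategy, convex-combining them realizes every $v \in [0,1]$, and the issue disappears. A secondary subtlety worth flagging in the write-up is that the $\qtrr$ produced here is inherently existential: it exhibits \emph{some} $\qadv^{\ast} \in \qclsb$ achieving the right $\extg$-value rather than a uniform black-box transformation of $\qadv$. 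The theorem statement asks only for existence of $\qred$ with the stated effectiveness, so this is enough; preserving additional structure (e.g., black-boxness, tight time bounds) would require extra hypotheses on $\btrr$ and on how $\qclsb$ behaves under oracle access to $\qclsa$, which I would treat as a natural follow-up rather than part of this proof.
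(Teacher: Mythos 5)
Your proof is correct and follows essentially the same route as the paper's: pick a classical machine that is $\intg$-equivalent to the given $\qadv$, apply the extendability and closure clauses to obtain a quantum machine in $\qclsb$ with the required $\extg$-value, and let $\qtrr$ output that machine. The only place you go beyond the paper is in explicitly constructing the classical witness in $\eqmcls{\intg}{\qclsa}$ (via a convex combination of extremal strategies), a step the paper simply asserts with ``let $\adv$ be a classical machine equivalent to $\qadv$''; flagging and patching that implicit existence assumption is a reasonable addition, not a different approach.
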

 
 \begin{proof}
Consider any $\qadv\in\qclsa$. Let $\adv$ be a classical machine such that $\adv$ is $\extg$-equivalent to $\qadv$. Since $\red$ is $(\qclsa,\qclsb)$-closed, we know that $\tadv{\trr}{\adv} \in \eqmcls{\extg}{\qclsb}$ and hence there is a machine $N_\qadv \in\qclsb$ s.t. $\gval{\intg}(N_\qadv) = \gval{\intg}(\tadv{\trr}{\adv})$. Define $\qver{\trr}$ to be a quantum machine such that, given $\qadv \in\qclsa$, outputs $N_\qadv$. Namely $\tadv{\qtrr}{\qadv} := N_\qadv$. Let $\qred: = (\extg(\qclsb), {\qver{\trr}}, \intg(\qclsa))$. Clearly $\qred$ is $(\qclsa,\qclsb)$-consistent due to the way we defined $\qtrr$.  It is also $(\beta, \qclsa)$-effective because $\gval{\extg}(\tadv{\qtrr}{\qadv})  = \gval{\extg}(\tadv{\trr}{\adv})\geq \beta(\gval{\intg}(\adv)) = \beta(\gval{\intg}(\qadv))$. 
 \end{proof}
 

To apply the theorem, we need to check the two conditions of respectful reductions. The ``extendability'' condition is usually easy to verify. However, the  ``closure'' property can be challenging and subtle, depending on the classes of players we care about. We will be mostly interested in poly-time machines. Namely let $\clsa = \clsb$ be {poly-time classical machines} and $\qclsa = \qclsb$ be the collection of {poly-time quantum machines}, denote it by $\qpoly$. In this case, we propose a simple criteria that is easy to check in existing classical security reductions. When combined with a few other easily verifiable conditions, we can show class-respectful reductions. This in a way justifies a common belief that most post-quantum schemes are indeed quantum-secure, due to some simple form in their classical security reductions which seem ``quantum-friendly''. 

Let $\red=(\extg, \trr,\intg)$ be a classical black-box reduction. We say that $\red$ is \emph{straight-line} if the output machine of $\trr$ on $\adv$, which as before is denoted $T^\adv$, runs $\adv$ in straight-line till completion. Namely, other than the flexibility of choosing $\adv$'s random tape, $T$ behaves exactly like a honest challenger in $\intg$ when it invokes $\adv$. This type of reduction, due to its simple structure, is amenable to getting lifted. 

\begin{theorem}[Straight-line reduction: a useful condition for class-closure]
Let $\red = (\extg(\clsb), \trr, \intg(\clsa))$ be a classical reduction with $\clsa$ and $\clsb$ both being classical poly-time machines. Let $\qclsa = \qclsb$ be quantum poly-time machines. If $\red$ is black-box straight-line, $\qclsa$-compatible and value-dominating, then $\red$ is $(\qclsa,\qclsb)$-closed. 
\label{thm:slr}
\end{theorem}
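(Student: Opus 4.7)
The plan is to exhibit, for every classical machine $\adv \in \eqmcls{\intg}{\qclsa}$, a quantum poly-time machine that is $\extg$-equivalent to $\btrr^\adv$, thereby showing $\btrr^\adv \in \eqmcls{\extg}{\qclsb}$. By the definition of $\eqmcls{\intg}{\qclsa}$, some $\qadv \in \qclsa$ satisfies $\gval{\intg}(\qadv) = \gval{\intg}(\adv)$. My candidate witness is obtained by instantiating the black-box oracle machine $\btrr$ with $\qadv$ in place of $\adv$; call the resulting quantum machine $\btrr^{\qadv}$.

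The first step is to verify $\btrr^{\qadv} \in \qclsb = \qpoly$. Because $\red$ is black-box and straight-line, $\btrr$ interacts with its oracle exactly as an honest $\intg$-challenger would---a single forward invocation, with no rewinding or intermediate inspection of the oracle's state. Substituting the quantum $\qadv$ for the classical $\adv$ therefore yields a well-defined quantum process in which $\btrr$'s classical poly-time operations compose with the quantum messages of $\qadv$. $\qclsa$-compatibility ensures that $\intg(\qadv)$ really does make syntactic sense, so $\btrr$'s simulated interaction with $\qadv$ is legitimate. Since $\btrr$ itself is poly-time and $\qadv \in \qpoly$, the composed machine $\btrr^{\qadv}$ lies in $\qpoly$.

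The second step, and the heart of the argument, is to show $\gval{\extg}(\btrr^{\qadv}) = \gval{\extg}(\btrr^{\adv})$. The straight-line hypothesis restricts $\btrr$'s use of its oracle to one full $\intg$-interaction whose transcript consists of classical protocol messages, after which $\btrr$ deterministically post-processes the transcript (and its own randomness) into an $\extg$-output. Consequently $\gval{\extg}(\btrr^{\adv})$ depends only on the induced transcript distribution of $\adv$ against $\btrr$'s challenger simulation, and the analogous statement holds for $\gval{\extg}(\btrr^{\qadv})$. Let $\adv'$ be an (unbounded) classical sampler of the same transcript distribution that $\qadv$ produces against $\btrr$; then $\gval{\extg}(\btrr^{\adv'}) = \gval{\extg}(\btrr^{\qadv})$ by construction, and $\gval{\intg}(\adv') = \gval{\intg}(\qadv) = \gval{\intg}(\adv)$. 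Applying value-dominating to the pair $(\adv,\adv')$ gives $\gval{\extg}(\btrr^{\adv}) = \gval{\extg}(\btrr^{\adv'})$, and chaining the equalities yields the desired $\gval{\extg}(\btrr^{\adv}) = \gval{\extg}(\btrr^{\qadv})$.

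The main obstacle is this last step: justifying that value-dominating, phrased with reference to classical machines only, transfers to a quantum oracle through the straight-line structure. My bridge---replacing $\qadv$ by a (possibly inefficient) classical machine with the same $\intg$-transcript distribution---relies on reading value-dominating as a property of arbitrary classical machines rather than only those in $\clsa$, consistent with how the definition is stated. Once that is granted, the remainder is bookkeeping about the information flow between $\btrr$ and its oracle in a single-shot, black-box, no-rewinding execution.
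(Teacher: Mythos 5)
Your proof is correct and follows essentially the same route as the paper's: the witness is $T^{\qadv}$ for an $\intg$-equivalent $\qadv\in\qclsa$, membership of $T^{\qadv}$ in $\qpoly$ follows from $T$ being classical poly-time and running its oracle straight-line, and $\extg$-equivalence is obtained from the value-dominating property, with compatibility guaranteeing well-definedness. The only divergence is in the final step: the paper applies value-dominating directly to the pair $(\adv,\qadv)$, implicitly reading the definition as quantifying over arbitrary (including quantum) machines, whereas you interpose an unbounded classical sampler $\adv'$ of $\qadv$'s transcript distribution so that value-dominating is only ever invoked on classical machines --- a more conservative reading whose extra bookkeeping (that a straight-line $T$'s output depends only on the oracle's classical transcript distribution) is sound and, if anything, makes explicit a point the paper's one-line appeal glosses over.
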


\begin{proof}
For any $\adv\in\eqmcls{\intg}{\qclsa}$, let $\qadv\in\qclsa$ be such that $\adv$ and $\qadv$ are $\intg$-equivalent.  We argue that $T^\adv$  and $T^\qadv$ are $\extg$-equivalent and hence $T^\adv\in\eqmcls{\extg}{\qclsb}$.  Since $\adv$ and $\qadv$ are $\intg$-equivalent and $\red$ is value-dominating, $\gval{\extg}(T^\qadv) = \gval{\extg}(T^\adv)$. $T^\qadv \in \qclsb$, i.e., it is quantum poly-time, since $T$ is classical poly-time, and runs any oracle in straight-line. Finally, note that we need the compatibility condition so that all objects above are well-defined. 
\end{proof}

Combine the extendibility condition, we get the corollary below from Theorem~\ref{thm:ql-gpr}. 

\begin{corollary}
Let $\red$ be a classical black-box reduction for classical poly-time players. 
Let $\qclsa=\qclsb$ be quantum poly-time machines. If $\red$ is  $(\beta,\qclsa)$-extendible, straight-line, and value-dominating, then $\red$ is $\respab$-respectful. As a consequence, there is a reduction $\qred(\qclsa,\qclsb)$ that is $(\beta,\qclsa)$-effective.
\label{cor:sltoresp}
\end{corollary}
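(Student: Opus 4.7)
The proof plan is a short combination of Theorem~\ref{thm:slr} and Theorem~\ref{thm:ql-gpr}, since the corollary is essentially the composition of the chain ``straight-line $\Rightarrow$ closed $\Rightarrow$ respectful $\Rightarrow$ liftable.'' To establish that $\red$ is $\respab$-respectful, I need to verify the two clauses of Definition~\ref{def:respred}: $(\beta,\qclsa)$-extendability and $(\qclsa,\qclsb)$-closure. The first is taken verbatim from the hypotheses, so nothing has to be done for it.

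For $(\qclsa,\qclsb)$-closure, I would invoke Theorem~\ref{thm:slr} directly. Its hypotheses are (i) black-box and straight-line, (ii) $\qclsa$-compatible, and (iii) value-dominating; of these, (i) and (iii) are hypotheses of the corollary, and (ii) is inherited from the game-preserving setting together with extendability: because $\qgame=\game$, the challenger circuits in $\intg$ and $\extg$ are well-defined when interacting with any $\qadv\in\qclsa$, and because $\trr$ invokes its oracle only in straight-line, the description $\trr(\qadv)$ is a well-defined quantum machine. Thus Theorem~\ref{thm:slr} applies and yields that $\red$ is $(\qclsa,\qclsb)$-closed.

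With both extendability and closure in hand, $\red$ is $\respab$-respectful by Definition~\ref{def:respred}. The ``as a consequence'' clause of the corollary then follows by directly feeding this into Theorem~\ref{thm:ql-gpr}, which for any respectful reduction produces the $(\qclsa,\qclsb)$-consistent quantum reduction $\qred(\qclsa,\qclsb):=(\extg(\qclsb),\qtrr,\intg(\qclsa))$ with the claimed $(\beta,\qclsa)$-effectiveness bound. No new construction of $\qtrr$ is needed here; it is the one produced inside the proof of Theorem~\ref{thm:ql-gpr}.

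The only subtle point---more a matter of bookkeeping than a genuine obstacle---is reconciling the two flavors of compatibility appearing in the hypotheses: Definition~\ref{def:respred} phrases extendability with respect to the classical realizability class $\eqmcls{\intg}{\qclsa}$, whereas Theorem~\ref{thm:slr} asks for $\qclsa$-compatibility on quantum machines. In the game-preserving regime these are two views of the same requirement, since every $\qadv\in\qclsa$ is trivially $\intg$-equivalent to itself (hence belongs in spirit to $\eqmcls{\intg}{\qclsa}$) and the challenger does not distinguish classical from quantum opponents at the syntactic level. Making this identification explicit is the only step that requires any care before the two theorems chain together.
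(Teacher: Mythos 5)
Your proposal is correct and follows exactly the route the paper intends: extendability is a hypothesis, closure comes from Theorem~\ref{thm:slr}, and the lifted reduction then falls out of Theorem~\ref{thm:ql-gpr}. Your remark on reconciling $\qclsa$-compatibility with $\eqmcls{\intg}{\qclsa}$-compatibility is a point the paper glosses over entirely, so if anything your write-up is slightly more careful than the original.
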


Note that in this scenario, $\qred$ is also straight-line and $\qtrr(\qadv) = T^\qadv$. Loosely speaking, the very same reduction carries over to the quantum setting.


\subsection{Lifting Game-Updating Reductions}
\label{ssec:gur}

Sometimes we need to update $\extgq$ or $\intgq$ or both, in order to capture the right computational assumption and the security property against quantum players. In this case, the classical transformation procedure may become totally inappropriate and give little clue about how to restore a quantum reduction (if there exists one). 

 \def\qtnod{{\qtrr_0}}

We view this issue as a matter of ``language-barrier''. One way to establish a reduction $\qred(\qclsa,\qclsb)$ is to introduce an \emph{interpreter} $\qintp$ that translates the ``languages'' between the players in the original (classical) and updated (quantum) games. Namely,  $\qintp$ translates an adversary $\qadv$ in $\intgq$ to an adversary $\qadv'$ in the classical game $\intg$. Then we can reduce the issue to the game-preserving case and consider a class of quantum adversaries $\qclsa':=\qintp(\qclsa)$. Suppose we can lift the classical reduction to work with adversaries in $\qclsa'$, then we end up with a quantum adversary in game $\extg$. 
Next, by the same token, $\qintp$ translates the adversary into a quantum one compatible in $\extgq$. This procedure gives a quantum transformer $\qtrr: = \qintp \circ \qtnod \circ \qintp$ that operates as follows 
$$ \qadv \in \qclsa \stackrel{\qintp}{\longrightarrow} \qadv' \stackrel{\qtnod}{\longrightarrow} \tadv{\qtnod}{\qadv'} \stackrel{\qintp}{\longrightarrow} \bqadv \in \qclsb \, .$$

\ftodo{insert picture with interpreter}

We formalize this idea, and propose \emph{class-translatable} reductions as a template for quantum-friendly reductions in the game-updating case. For simplicity, we assume only $\intg$ is updated to $\intgq$ and $\extg$ stays the same
. We want to investigate if a reduction of the form $(\extg(\qclsb),\qtrr,\intgq(\qclsa))$ can be derived. It is straightforward to adapt the treatment to the scenario where $\extg$ (or both) gets updated. 

\begin{definition}[$(\beta,\beta')$-$(\qclsa,\qclsb)$-translatable reductions]
Let $\red$ be a classical reduction $\sredcls$ and $\beta, \beta'$ be two functions. Let $\intgq$ be a quantum game, and $(\qclsa,\qclsb)$ be classes of quantum machines. We say $\red$ is \transred-translatable, if there exists a machine (i.e. Interpreter) $\qintp$, such that the following hold: 
\begin{itemize}
	\item $\red$ is $\beta$-$(\qclsb,\qclsa')$-respectful, where $\qclsa': = \qintp(\qclsa)$. 
	\item $(\intg, \qintp, \intgq)$ is a $(\beta',\qclsa)$-effective reduction. Namely $\forall \qadv\in \qclsa$, $\gval{\intg}(\qintp(\qadv)) \geq \beta'(\gval{\intgq}(\qadv))$. 
\end{itemize}

\end{definition}

 \begin{theorem}[Quantum lifting for game-updating reductions]
If $\red(\clsa,\clsb)$ is \transred-translatable, then there exists an $(\qclsa,\qclsb)$-consistent reduction ${\qred}(\qclsa,\qclsb) := (\extg(\qclsb),\qtrr,\intgq(\qclsa))$ that is 
$(\beta\cdot\beta',\qclsa)$-effective.
 \label{thm:ql-gur}
 \end{theorem}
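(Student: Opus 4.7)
The plan is to compose the interpreter $\qintp$ supplied by translatability with the quantum reduction obtained from the game-preserving lifting theorem, and then to chain the two effectiveness bounds. Concretely, I would first invoke Theorem~\ref{thm:ql-gpr} on the respectfulness clause of translatability: by assumption $\red$ is $\beta$-respectful on $\qclsa' := \qintp(\qclsa)$ (with target class $\qclsb$), so the game-preserving lifting theorem yields a quantum reduction $\qred_0(\qclsa',\qclsb) = (\extg(\qclsb),\qtrr_0,\intg(\qclsa'))$ which is $(\qclsa',\qclsb)$-consistent and $(\beta,\qclsa')$-effective. In particular, for every $\qadv'\in\qclsa'$, $\qtrr_0(\qadv')\in\qclsb$ and $\gval{\extg}(\qtrr_0(\qadv')) \geq \beta(\gval{\intg}(\qadv'))$.

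Next I would define the composite transformer $\qtrr := \qtrr_0 \circ \qintp$: on input $\qadv \in \qclsa$, it first computes $\qadv' := \qintp(\qadv) \in \qclsa'$ and then outputs $\qtrr_0(\qadv')$. Set $\qred := (\extg(\qclsb), \qtrr, \intgq(\qclsa))$. Consistency is then immediate: $\qintp(\qclsa) = \qclsa'$ by definition, and $\qtrr_0$ maps $\qclsa'$ into $\qclsb$, so $\qtrr(\qadv) \in \qclsb$. All intermediate games along the pipeline are well-defined thanks to the compatibility clauses built into respectfulness of $\red$ on $\qclsa'$ and into the effectiveness of the interpreter reduction $(\intg,\qintp,\intgq)$ on $\qclsa$.

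Finally I would verify $(\beta \cdot \beta', \qclsa)$-effectiveness by chaining the two effectiveness bounds: for any $\qadv \in \qclsa$,
\[
 \gval{\extg}(\qtrr(\qadv)) \;=\; \gval{\extg}(\qtrr_0(\qintp(\qadv))) \;\geq\; \beta\bigl(\gval{\intg}(\qintp(\qadv))\bigr) \;\geq\; \beta\bigl(\beta'(\gval{\intgq}(\qadv))\bigr),
\]
where the first inequality is the $(\beta,\qclsa')$-effectiveness of $\qred_0$ applied to $\qadv' \in \qclsa'$, and the second is the $(\beta',\qclsa)$-effectiveness of $(\intg,\qintp,\intgq)$. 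Reading $\beta\cdot\beta'$ as functional composition, this is exactly the claim.

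The one mildly subtle point I anticipate is the implicit monotonicity of $\beta$ needed to plug the inner inequality into the outer one when $\beta,\beta'$ are viewed as abstract functions $\preal\to\preal$. This is automatic in every application of interest, since $\beta$ is always a non-decreasing success function; and if one prefers to view $\beta,\beta'$ as scalar multiplicative factors on success probability (the customary convention), both inequalities are of the multiplicative form $\gval{}\geq\beta\cdot(\cdot)$ and the composition goes through without any extra assumption. Beyond this bookkeeping, no new idea beyond Theorem~\ref{thm:ql-gpr} and a definitional chase is needed.
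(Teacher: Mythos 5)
Your proposal is correct and takes essentially the same route as the paper's own proof: invoke Theorem~\ref{thm:ql-gpr} on the respectfulness clause to obtain $\qtrr_0$, define the composite transformer $\qtrr := \qtrr_0 \circ \qintp$, and chain the two effectiveness bounds. Your side remark about the monotonicity of $\beta$ is a detail the paper glosses over by treating $\beta$ as a multiplicative scalar, but it does not alter the argument.
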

 
 \begin{proof}
By the hypothesis, we know there is an interpreter $\qintp$. Since $\red$ is \respred{\qclsb}{\qclsa'}-respectful, by Theorem~\ref{thm:ql-gpr}, there is a $\qtrr_0$ s.t. $(\extg(\qclsb), \qtnod,\intg(\qclsa'))$ is $(\beta,\qclsa')$-effective. Define $\qtrr : = \qtnod \circ \qintp$ and $\qred: = (\extg,\qtrr,\intgq)$. Clearly, $\qred$ is $(\qclsa,\qclsb)$-consistent because for any $\qadv\in \qclsa$, $\qtrr(\qadv) = \qtnod(\qintp(\qadv)) \in \qclsb$. On the other hand, for any $\qadv\in \qclsa$ it holds that  $\gval{\extg}(\qtrr(\qadv)) \geq \beta \cdot \gval{\intg}(\qintp(\qadv)) \geq \beta \beta' \cdot \gval{\intgq}(\qadv)$.   
 \end{proof}
 
In contrast to the game-preserving setting, applying lifting theorem for game-updating reductions typically needs non-trivial extra work. The main difficulty comes from showing existence of an interpreter $\qintp$ with the desired properties. In Sect.~\ref{ssec:appgu-fdh}, we give an example that demonstrates potential applications of Theorem~\ref{thm:ql-gur}.  
 

\section{Applications}
\label{sec:app}

We give a few examples to demonstrate our framework for ``quantum-friendly'' reductions. In the game-preserving setting (Section~\ref{ssec:appgp}), we show two versions of quantum-secure hash-based signatures schemes assuming quantum-resistant one-way functions. One follows the generic construction that builds upon Lamport's OTS and Merkle's original hash-tree idea.  The other is an efficient variant proposed in~\cite{BDH11} that uses a more compact one-time signature scheme and a more sophisticated tree structure. In the game-updating setting (Section~\ref{ssec:appgu-fdh}), we give an alternative proof  for Full-Domain Hash (FDH) in the Quantum RO model as shown in~\cite{Zha12ibe}. We stress that this proof is meant to illustrate how our lifting theorem can be potentially applied, as apposed to providing new technical insights. Unless otherwise specified, all players are either classical or quantum poly-time machines. 
  
\subsection{Quantum Security for Hash-based Signatures}
\label{ssec:appgp}

Classically, there are generic constructions (and efficient variants) for \eucma-secure signature schemes based on one-way functions. We show that security reductions there can be lifted easily, using our \emph{class-respecful} characterization. It follows that there are classical signature schemes that are secure against quantum attacks, merely assuming existence of quantum-resistant one-way functions. 

\subsubsection{Generic hash-tree signature schemes}
\label{sssec:merkle}

A generic approach for constructing \eucma-secure signature scheme from $\owfs$ goes as follows: 

\begin{itemize}
	\item A one-time signature ($\ots$) is constructed based on $\owfs$. There are various ways to achieve it. We consider Lamport's construction (\lots) here~\cite{Lamport79}. 
	\item A family of universal one-way hash functions ($\uowhf$) 
	is constructed based on $\owfs$. This was shown by Rompel~\cite{Romp90} and we denote the hash family \ruowhf.
	\item An $\ots$ scheme is converted to a full-fledged (stateful) signature scheme using $\uowhf$.  The conversion we consider here is essentially Merkle's original hash-tree construction~\cite{Mer90}. 
\end{itemize}
We show next that each step can be ``lifted'' to the quantum setting using our lifting theorem for game-preserving reductions (Theorem~\ref{thm:ql-gpr}) and the straight-line characterization (Theorem~\ref{thm:slr}). Note that we do not intend to optimize the construction here. For instance, one can use a pseudorandom function to make the signature scheme stateless. Verifying whether these still hold in the quantum setting is left as future work, though we believe it is the case, following the framework and tools we have developed. 

\mypar{Lamport's $\ots$} Consider the (classical) reduction $\red:=(\ginv,\trr,\gforot)$, where $\ginv$ is the inversion game and $\gforot$ is the one-time forgery game. It is straight-line and value-dominating.  Both games are compatible with $\qpoly$ and $\gval{\extg}(T^{\adv}) \geq \beta \cdot \gval{\intg}(\adv)$ for any $\adv$ with $\beta (x) = \frac{1}{2\ell(n)} x$ and $\ell(n)$ a polynomial representing the length of the messages. Hence $\red$ is $(\beta,\qpoly)$-effective as well. Thus we claim that: 
\begin{proposition}
$(\ginv, \secpar{\qpoly} = \negl(n))_{\owf}$ implies $(\gforot, \secpar{\qpoly} = \negl(n))_{\lots}$.
Namely, assuming quantum-resistant $\owfs$, there exists $\eucma$-secure $\ots$ against quantum attackers $\qpoly$.
\label{prop:lots}
\end{proposition}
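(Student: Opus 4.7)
My plan is to apply the lifting machinery of Section~\ref{ssec:gpr} to the classical reduction $\red=(\ginv,\trr,\gforot)$ underlying Lamport's OTS. First, I would recall the standard $\trr$: embed the $\ginv$-challenge $y^*$ into a uniformly chosen slot $(i^*,b^*)\in[\ell(n)]\times\{0,1\}$ of the public key, generate the other $2\ell(n)-1$ secret-key components honestly, honestly answer $\adv$'s single signing query on $m$ when $m_{i^*}\neq b^*$ (abort otherwise), and upon a forgery $(m^*,\sigma^*)$ with $m^*_{i^*}=b^*$ output $\sigma^*_{i^*}$ as the preimage candidate. The usual counting argument gives $\gval{\ginv}(T^\adv)\geq \frac{1}{2\ell(n)}\gval{\gforot}(\adv)$ for every classical $\adv$, so $\red$ is $(\beta,\qpoly)$-extendable with $\beta(x)=x/(2\ell(n))$.

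Next I would verify the remaining hypotheses of Corollary~\ref{cor:sltoresp}. The transformer $T$ is black-box straight-line because it invokes $\adv$ exactly once and interacts with it in the same message pattern as an honest $\gforot$-challenger, with no rewinding or branching. It is value-dominating because the success probability of $T^\adv$ factors as ``the random slot guess is correct'' (an event sampled independently of $\adv$) times ``$\adv$ forges at that slot,'' and the aggregate depends on $\adv$ only through $\gval{\gforot}(\adv)$ by symmetry over the $2\ell(n)$ slots. Both $\ginv$ and $\gforot$ exchange only classical messages, so they remain well-defined when played with any $\qadv\in\qpoly$, establishing $\qpoly$-compatibility. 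Corollary~\ref{cor:sltoresp} then certifies $\red$ as $\beta$-$(\qpoly,\qpoly)$-respectful, and Theorem~\ref{thm:ql-gpr} lifts it to a quantum reduction $\qred(\qpoly,\qpoly)=(\ginv(\qpoly),\qtrr,\gforot(\qpoly))$ that is $(\beta,\qpoly)$-effective. Contrapositively, if some $\qadv\in\qpoly$ achieved $\gval{\gforot}(\qadv)\geq 1/p(n)$ for a polynomial $p$, then $\qtrr(\qadv)\in\qpoly$ would invert $f$ with probability at least $1/(2\ell(n)p(n))$, contradicting quantum-resistance of the underlying $\owf$; hence $\secpar{\qpoly}(\lots)=\negl(n)$.

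I expect the only delicate point to be articulating value-dominance in a way that is genuinely oblivious to whether $\adv$ is classical or quantum: the crux is that $T$ accesses $\adv$ only through the classical interaction transcript of $\gforot$ and through an independently sampled slot guess, never through $\adv$'s internal register. Once that is spelled out, no quantum-specific subtleties (no-cloning, measurement disturbance, superposition queries on the signing oracle) arise, because the signing oracle is answered classically from known key material and the one-bit output of $\ginv$ is consulted only at the end. The same template should then transfer verbatim to the $\uowhf$ and Merkle-tree reductions mentioned in Section~\ref{sssec:merkle}, where the only fresh task will be re-checking straight-line and value-dominating for each reduction in the composed chain.
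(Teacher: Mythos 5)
Your proposal is correct and follows essentially the same route as the paper: the paper's entire argument for this proposition is to assert that the classical reduction $(\ginv,\trr,\gforot)$ is straight-line, value-dominating, $\qpoly$-compatible, and $(\beta,\qpoly)$-effective with $\beta(x)=x/(2\ell(n))$, and then invoke Corollary~\ref{cor:sltoresp}, exactly as you do. You merely fill in the details the paper leaves implicit (the explicit slot-embedding transformer and the verification of each hypothesis); the one place where you are no more careful than the paper is value-dominance, where your claimed factorization is not exact --- the success probability of $T^{\adv}$ depends on how many bit positions the forgery flips relative to the queried message, not only on $\gval{\gforot}(\adv)$ --- so the strict-equality form of that property is asserted rather than proved in both accounts.
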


\mypar{$\uowhf$ from $\owfs$} Rompel's construction is complicated and the proof is technical (or rather tedious). However, the key ingredients in which security reductions are crucial are actually not hard to check. Basically, there are four major components in the construction: 

\begin{enumerate}
	\item From a given $\owf$ $f^0$, construct another  $\owf$ $f$ with certain structure. Basically, $f$ is more ``balanced'' in the sense that sampling a random element in the range of $f$ and then sub-sampling its pre-images is not much different from sampling a random element in the domain directly. 
	\item From $f$, construct $\uhash$ such that for any $x$, it is hard to find a collision in the so called ``hard-sibling'' set. The hard-sibling set should comprise a noticeable fraction of all possible collisions.  
	\item Amplifying the hard-sibling set so that finding any collision of a pre-determined $x$ is hard.
	\item Final refinements such as making the hash functions compressing.  
\end{enumerate}

The second step is the crux of the entire construction. There are three reductions showing that finding a hard-sibling is as hard as inverting $f$ which we will discuss in a bit detail below, whereas showing that the hard-sibling set is noticeably large is done by a probabilistic analysis and holds information-theoretically. Other steps either do not involve a security reduction and relies purely on some probabilistic analysis, or the reductions are clearly liftable. 

The three reduction in step 2 involve four games: $\ginv$--the standard inversion game for $\owfs$; $\ginvv$--a variant of $\ginv$ in which $y$ is sampled according to another distribution, as opposed to sampling a domain element $x$ uniformly at random and setting $y:=f(x)$; $\gcolv$, a variant of the collision game for $\uowhf$, in which an adversary is supposed to find a collision $x'$ in a special set (we don't specify it here); and $\gcolvv$, which further modifies $\gcolv$ in the distribution that $s$ is sampled (instead of uniformly at random). Then $\red_1=(\ginv,\trr_1,\ginvv)$, $\red_2:=(\ginvv, \trr_2,\gcolvv)$ and $\red_3 = (\gcolvv, \trr_3,\gcolv)$ are constructed. $\red_1$ and $\red_3$ essentially follow from the ``balanced'' structure of $f$, and $\red_2$ comes from the construction of $\uhash$. All three reductions are black-box straight-line, value-dominating, and $(\beta_i,\qpoly)$-effective with $\beta_i \geq 1/{p_i(n)} $ for some polynomial $p_i, i\in\{1,2,3\}$. {For concreteness, we can set $p_1 = \ell'(n)$--the length of the input string of $f^0$, $p_2 = 3$ a constant, and $p_3 (n) = 5\ell'(n) + \log \ell'(n) +2$. Our exposition here and parameter choices are adapted from~\cite{KK05}.} 

\begin{proposition}
$(\ginv, \secpar{\qpoly} = \negl(n))_\owf$ implies $(\gcol, \secpar{\qpoly} = \negl(n))_\ruowhf$.
Namely, assuming quantum-resistant $\owfs$, there exist $\uowhf$ secure against quantum attackers $\qpoly$.
\label{prop:luowhf}
\end{proposition}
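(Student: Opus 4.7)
The plan is to chain together the individual reductions in Rompel's construction, lifting each one to the quantum setting via the game-preserving machinery (Corollary~\ref{cor:sltoresp} and Theorem~\ref{thm:ql-gpr}), and then glue them with the purely information-theoretic components of the construction, which transfer to the quantum setting for free.

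\textbf{Step~1 (Lift the three reductions of step~2).} First I would show that each of $\red_1=(\ginv,\trr_1,\ginvv)$, $\red_2=(\ginvv,\trr_2,\gcolvv)$ and $\red_3=(\gcolvv,\trr_3,\gcolv)$ is $\beta_i$-$(\qpoly,\qpoly)$-respectful. By inspecting the constructions, each $\trr_i$ runs the given adversary once in straight-line (simply sampling the various domain/seed elements according to the appropriate distribution, handing them to the adversary, and post-processing the output), so the reductions are black-box straight-line in the sense of Theorem~\ref{thm:slr}. All four games $\ginv,\ginvv,\gcolv,\gcolvv$ are syntactically well-defined for quantum poly-time machines (only the challenger samples classical randomness and evaluates $f$ or $h_s$), which yields $\qpoly$-compatibility. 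Value-domination holds because the success probability of $T_i^{\adv}$ is computed purely from the input/output distribution of $\adv$ in the respective game, independent of its internal implementation. Together with the $(\beta_i,\qpoly)$-extendability already established classically (with $\beta_i\geq 1/p_i(n)$), Corollary~\ref{cor:sltoresp} then yields that each $\red_i$ is $\beta_i$-$(\qpoly,\qpoly)$-respectful, and Theorem~\ref{thm:ql-gpr} produces quantum reductions $\qred_1,\qred_2,\qred_3$ with the same effectiveness up to the factor $\beta_i$.

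\textbf{Step~2 (Compose and glue in the information-theoretic parts).} Composing $\qred_3$, $\qred_2$ and $\qred_1$ in that order gives a single quantum reduction from $\ginv$ down to $\gcolv$ for the intermediate hash family, with combined effectiveness at least $\beta_1\beta_2\beta_3 \geq 1/(p_1p_2p_3)$, which is still noticeable. Next I would incorporate the statement that the hard-sibling set comprises a noticeable fraction of all siblings: this is an information-theoretic statement about the random sampling inside the game and makes no reference to the adversary's internals, so a $\qpoly$ adversary winning $\gcol$ with probability $\mu$ wins $\gcolv$ with probability at least $\mu/q(n)$ for some polynomial $q$. This gives a trivial straight-line, value-dominating, $\qpoly$-compatible reduction from $\gcolv$ to $\gcol$ for the intermediate hash family, which is lifted by the same argument as in Step~1. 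The amplification in step~3 of the construction (moving from noticeable hard-sibling collisions to full collision resistance on any pre-specified $x$) and the final refinements in step~4 (compression) are also straight-line reductions in which $\trr$ simply runs the adversary on an independently sampled seed and aggregates outputs: they satisfy exactly the same four bullet-point conditions, and so lift by Corollary~\ref{cor:sltoresp}.

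\textbf{Step~3 (Assemble the final statement).} Chaining all of the above quantum reductions yields a single $(\qpoly,\qpoly)$-consistent reduction from $\ginv$ on the starting $\owf$ to $\gcol$ on $\ruowhf$ that is $(\beta,\qpoly)$-effective for some $\beta\geq 1/\op{poly}(n)$. Consequently, assuming $\gval{\ginv}(\qpoly)\leq \negl(n)$ for the underlying quantum-resistant $\owf$, we obtain $\gval{\gcol}(\qpoly)\leq \negl(n)/\beta=\negl(n)$ for $\ruowhf$, as required.

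\textbf{Main obstacle.} The conceptually delicate part is checking that the reductions in step~2 are genuinely straight-line and value-dominating rather than merely looking so: in particular, $\trr_1$ and $\trr_3$ rely on sampling according to non-uniform distributions derived from the ``balanced'' structure of $f$, and one must verify that this sampling is performed \emph{before} the adversary is invoked and is independent of the adversary's internal operation, so that re-running the adversary on a quantum state (which one cannot rewind) is never needed. Once this is established, the rest of the proof is a routine composition of respectful reductions with the information-theoretic glue.
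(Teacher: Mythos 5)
Your proposal is correct and follows essentially the same route as the paper: the paper likewise lifts the three step-2 reductions $\red_1,\red_2,\red_3$ by checking the straight-line, value-dominating and compatibility conditions of Corollary~\ref{cor:sltoresp}, and dismisses the remaining components as either purely information-theoretic (the noticeable size of the hard-sibling set) or clearly liftable. Your write-up is somewhat more explicit about the composition of the lifted reductions and about why value-domination holds, but the underlying argument is the same.
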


\mypar{Hash-tree: converting $\ots$ to full-fledged signatures} Once a family of $\uowhf$ and an $\ots$ are at hand, we can get a full-fledged signature scheme based on Merkle's hash-tree construction. Basically, one constructs a depth-$k$ binary tree and each leaf corresponds to a message. Each node in the tree is associated with a key-pair $(pk_w, sk_w)$ of the $\ots$ scheme. The signature of a message $m$ consists of $\sigma_m:= \sign(sk_m, m)$ and an authentication chain. For each node $w$ along the path from the root to the message, we apply $\uhash$ to the concatenation of its children's public keys and then sign the resulting string with its secret key $sk_w$. The authentication chain contains all these $(pk_{w0}, pk_{w1},\sigma_w: = \sign(sk_w, pk_{w0}\|pk_{w1}))$. Let $\mtree$ be the resulting tree-based scheme and $\gfor$ be the forgery game. The classical security analysis builds upon two reductions $(\gcol,\trr,\gfor)$ and $(\gforot,\trr',\gfor)$. It is easy to check that both satisfy the conditions in Corollary~\ref{cor:sltoresp}.

\begin{proposition}
$(\gcol, \secpar{\qpoly} = \negl(n))_\uowhf$ and $(\gforot, \secpar{\qpoly} = \negl(n))_\ots$ imply $(\gfor, \secpar{\qpoly} = \negl(n))_\mtree$.
Namely, assuming quantum-resistant $\uowhf$ and $\ots$, there exist an $\eucma$-secure signature scheme against quantum attackers $\qpoly$.
\label{prop:lmtree}
\end{proposition}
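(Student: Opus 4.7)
The plan is to apply Corollary~\ref{cor:sltoresp} to each of the two classical reductions $\red_1 = (\gcol, \trr, \gfor)$ and $\red_2 = (\gforot, \trr', \gfor)$ separately, lifting them individually to the quantum setting, and then composing the contrapositives. I would first describe each reduction explicitly. Given a forger $\adv$ against $\mtree$, the transformer $\trr$ guesses uniformly at random an internal tree node $w^*$ that will be touched during $\adv$'s interaction, embeds the external $\uowhf$ challenge at $w^*$ (committing to $pk_{w^*0}\|pk_{w^*1}$ as its UOWHF target and using the returned key $s$ as the hash key at that node), samples every other node's $\ots$ key pair and hash key honestly, runs $\adv$ once start to finish, and answers every signature query using the freshly generated secret keys. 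If $\adv$'s forgery has an authentication chain that first diverges from a previously signed chain at $w^*$, then $\trr$ reads off a colliding preimage under $h_s$. The transformer $\trr'$ is analogous: it embeds the external $\ots$ public key at $w^*$, uses the one-time signing oracle to answer the single signing query that $w^*$ ever needs, and outputs $\adv$'s fresh signature at $w^*$ when the authentication chain passes through $w^*$ without producing a collision.

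Next I would verify the three hypotheses of Corollary~\ref{cor:sltoresp} for both reductions. They are black-box straight-line because each invokes $\adv$ exactly once, never rewinds, and never inspects $\adv$'s internal registers; all signature queries are answered on the fly since the reduction holds every secret key except the embedded challenge, for which a single oracle query suffices. They are $\qpoly$-compatible because $\gfor$, $\gcol$, and $\gforot$ are games well-defined for arbitrary interactive machines. They are value-dominating because all of the reduction's internal randomness (the guess $w^*$, all honestly sampled keys, the positioning of the challenge) is drawn independently of $\adv$, so the reduction's output distribution depends only on the classical input-output behaviour of $\adv$ in $\gfor$; hence any $\gfor$-equivalent machine induces the same $\gcol$-value and the same $\gforot$-value. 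Finally, a standard hash-tree analysis gives $(\beta,\qpoly)$-extendibility with $\beta(x) \geq x/(2N(n))$, where $N(n)$ polynomially bounds the total number of internal nodes touched: conditioned on a successful forgery, either some node exhibits a collision with a previously signed string or some node exhibits a genuine one-time forgery, and correctly guessing the node and the case costs only a polynomial factor.

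By Corollary~\ref{cor:sltoresp}, each classical reduction lifts to an $(\qpoly,\qpoly)$-consistent quantum reduction $\qred_i$ that is $(\beta,\qpoly)$-effective. Composing: if some $\qadv \in \qpoly$ wins $\gfor$ against $\mtree$ with non-negligible probability $\mu(n)$, then at least one of $\qred_1(\qadv)$ or $\qred_2(\qadv)$ wins $\gcol$ against $\ruowhf$ or $\gforot$ against the underlying $\ots$ with non-negligible probability at least $\mu(n)/(2N(n))$, contradicting one of the two hypotheses of the proposition.

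The step most deserving of care is the value-dominating check, because a classical $\adv \in \eqmcls{\gfor}{\qpoly}$ need only match some quantum poly-time machine's $\gfor$-value, not its full transcript distribution. The saving grace is exactly Theorem~\ref{thm:slr}: since $\trr$ and $\trr'$ are straight-line black-box simulators whose only interaction with $\adv$ is through the game channel, no rewinding, no measurement of $\adv$'s workspace, and no cloning of $\adv$'s internal registers ever occurs in the transformation, and the closure condition follows with no further quantum-specific work. This is precisely what makes the classical hash-tree construction quantum-friendly.
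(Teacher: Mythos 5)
Your proposal is correct and takes essentially the same route as the paper: the paper's entire argument for this proposition is to name the two classical reductions $(\gcol,\trr,\gfor)$ and $(\gforot,\trr',\gfor)$ and assert that it is easy to check that both satisfy the conditions in Corollary~\ref{cor:sltoresp}. You supply the explicit node-guessing transformers and the verification of the straight-line, compatibility, value-dominating and extendibility conditions that the paper leaves implicit, and the caveat you flag at the end (that value-dominating must be argued from the reduction depending only on $\adv$'s input--output behaviour, since $\gfor$-equivalence alone only fixes the game value) is precisely the subtlety the paper glosses over with ``easy to check.''
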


Combining Propositions~\ref{prop:lots},~\ref{prop:luowhf}, and~\ref{prop:lmtree}, we get 
\begin{theorem}
Assuming quantum-resistant $\owfs$, there exists $\eucma$-secure signature schemes against quantum poly-time attackers $\qpoly$.
\label{thm:owftosig}
\end{theorem}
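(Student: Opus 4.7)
The plan is to chain Propositions~\ref{prop:lots},~\ref{prop:luowhf}, and~\ref{prop:lmtree} in sequence, treating the conclusion of each as a hypothesis for the next. Starting from the assumption that quantum-resistant $\owfs$ exist, i.e.\ $(\ginv, \secpar{\qpoly} = \negl(n))_\owf$, Proposition~\ref{prop:lots} gives a quantum-secure Lamport $\ots$, namely $(\gforot, \secpar{\qpoly} = \negl(n))_\lots$, and Proposition~\ref{prop:luowhf} independently gives a quantum-secure universal one-way hash family, namely $(\gcol, \secpar{\qpoly} = \negl(n))_\ruowhf$. Feeding both of these into Proposition~\ref{prop:lmtree} (instantiated with the Lamport $\ots$ and the Rompel $\uowhf$ in the Merkle hash-tree construction $\mtree$) yields $(\gfor, \secpar{\qpoly} = \negl(n))_\mtree$, which is exactly $\eucma$-security of $\mtree$ against quantum poly-time adversaries.

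The only real checks to do are bookkeeping. First, each of the three propositions was established through a black-box, straight-line, value-dominating, $\qpoly$-compatible, and $(\beta_i,\qpoly)$-extendable classical reduction, which by Corollary~\ref{cor:sltoresp} yields a $\respab$-respectful reduction and hence, via Theorem~\ref{thm:ql-gpr}, a quantum reduction $\qred(\qpoly,\qpoly)$ that is $(\beta_i,\qpoly)$-effective. Second, the advantage losses $\beta_i$ are all of the form $1/\operatorname{poly}(n)$, so composing them preserves negligibility: if $\gval{\gfor}(\qadv)$ were non-negligible for some $\qadv\in\qpoly$, then tracing back through the Merkle-tree reduction would yield a $\qpoly$ adversary with non-negligible advantage against either $\gforot$ or $\gcol$, and then through Propositions~\ref{prop:lots} or~\ref{prop:luowhf} a $\qpoly$ adversary inverting $f$ with non-negligible probability, contradicting the hypothesis.

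The one point that requires a brief sanity check is that $\qpoly$ is closed under each composition step: the intermediate transformers $\qtrr_i$ produced by Theorem~\ref{thm:ql-gpr} are themselves poly-time quantum machines, and because every reduction in sight is straight-line and black-box, the resulting composed transformer remains in $\qpoly$ and makes only classical oracle queries to $\qadv$. In particular, no rewinding is introduced anywhere in the chain, which is what allows the classical proofs to be lifted wholesale. I do not expect any real obstacle here; the work is essentially to apply Theorem~\ref{thm:ql-gpr} three times and observe that negligible $\cdot\ 1/\operatorname{poly}(n)$ remains negligible.
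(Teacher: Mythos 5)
Your proposal is correct and matches the paper's own argument, which simply combines Propositions~\ref{prop:lots},~\ref{prop:luowhf}, and~\ref{prop:lmtree}; the extra bookkeeping you supply (that each $\beta_i$ is inverse-polynomial so negligibility is preserved, and that $\qpoly$ is closed under the straight-line compositions) is exactly the implicit content of that one-line combination. Nothing further is needed.
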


\subsubsection{XMSS: an efficient variant}
\label{sssec:xmss}

The \xmss~scheme~\cite{BDH11} can be seen an efficient instantiation of the generic construction above. It uses a different one-time signature scheme called Winternitz-OTS ($\wots$ for short), which can be based on a family of pseudorandom functions, which in turn exists from the ``minimal'' assumption that $\owfs$ exist. The hash-tree (which is called $\xmss$-tree in~\cite{BDH11}) also differs slightly. We now show that both the security of $\wots$ and the conversion by $\xmss$-tree are still valid against quantum adversaries. 

\mypar{Quantum security of $\wots$} Classically, existence of $\owf$ imply the $\eucma$-security of $\wots$. This is established in three steps: 1) By standard constructions, a pseudorandom generator (\prg) can be constructed from $\owfs$~\cite{HILL99}, and then one can construct a pseudo-random function ($\prf$) from a $\prg$~\cite{GMM86}. 2) A $\prf$ is shown to be also key-one-way ($\kow$, defined later). 3) Show that $\kow$ implies $\eucma$-security of $\wots$  by a reduction. 

The first step is known to be true in the presence of quantum adversaries~\cite{Zha12prf}\footnote{It is easy to verify that the security reduction from $\prg$ to $\prf$ in GMM construction is quantum friendly. The security analysis in the HILL $\prf$ construction from $\owfs$ is much more complicated. To the best of our knowledge, no rigorous argument has appeared in the literature. It would be a nice exercise to apply our framework and give a formal proof.}. Informally the game for $\kow$ of a function family $F$ goes as follows: $\chr$ samples a random function $f_k \in_R F$ and a random element $x$ in the domain. $(x, y:=f_k(x))$ is sent to an adversary $\adv$, who tries to find $k'$ such that $f_{k'} (x) = y$. The $\prf$ to $\kow$ reduction is straight-line and value-dominating. Extendibility is trivial. Therefore it is $\qpoly$-respectful. This is also the case in the $\kow$ to $\eucma$-security of $\wots$ reduction. In addition $\beta$ is 1 for both reductions, which means that the effectiveness (i.e., tightness in terms of success probability) in the classical analysis carries over unchanged to the quantum setting. 

\begin{proposition}
$(\gprf, \secpar{\qpoly} = \negl(n))$ implies $(\gforot, \secpar{\qpoly} = \negl(n))_\wots$.
Namely, assuming a quantum-resistant $\prf$, $\wots$ is one-time \eucma-secure against quantum attackers $\qpoly$.
\label{prop:lwots}
\end{proposition}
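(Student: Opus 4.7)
The plan is to factor the classical security argument through an intermediate key-one-wayness game and then apply Corollary~\ref{cor:sltoresp} twice, composing the resulting quantum reductions. I would first introduce $G^{\kow}$ for the PRF family $F=\{f_k\}$: the challenger samples $k,x$ uniformly, sends $(x,y:=f_k(x))$, and accepts any $k'$ satisfying $f_{k'}(x)=y$. The task then reduces to exhibiting two classical black-box reductions $\red_1=(\gprf, T_1, G^{\kow})$ and $\red_2=(G^{\kow}, T_2, \gforot_\wots)$, verifying each satisfies the hypotheses of Corollary~\ref{cor:sltoresp}, lifting both to the quantum setting, and setting $\qtrr:=\qtrr_1\circ\qtrr_2$.

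For $\red_1$, the transformer $T_1^\adv$ queries its PRF oracle on a fresh $x$ to obtain $y$, hands $(x,y)$ to $\adv$, and outputs the bit $[f_{k'}(x)=y]$ on $\adv$'s answer $k'$. Against a truly random function, a valid $k'$ exists with probability at most $|K|/|Y|=\negl(n)$, so $\red_1$ is $(\beta_1,\qpoly)$-extendable with $\beta_1$ essentially the identity. $T_1$ calls $\adv$ exactly once in straight-line, its output depends only on whether $\adv$ succeeds, and both games are plainly $\qpoly$-compatible; so $\red_1$ is straight-line and value-dominating, and Corollary~\ref{cor:sltoresp} yields a quantum-lifted $\qred_1$ with the same effectiveness.

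For $\red_2$, the transformer $T_2^{\adv}$ publishes a WOTS verification key whose Winternitz chains are honestly generated except that one chain step is replaced by the KOW image $y$; the position is either guessed uniformly in advance (paying a fixed polynomial loss) or, in the tight variant, handled by a hybrid so that $\beta_2$ is essentially the identity. $T_2$ answers $\adv$'s single signing query by revealing the appropriate chain prefix, then extracts a KOW pre-image $k'$ from any forgery that extends a shorter chain than the signed message. Once again $T_2$ invokes $\adv$ exactly once, and its success event is a deterministic function of $\adv$'s success event, so value-dominance, $\qpoly$-compatibility, and extendability are immediate. Corollary~\ref{cor:sltoresp} then lifts $\red_2$ to a quantum reduction $\qred_2$. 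Composing yields a reduction from $\gprf(\qpoly)$ to $\gforot_\wots(\qpoly)$, which gives the proposition.

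The main point of care is confirming that both transformers are genuinely straight-line: the WOTS chain embedding in $T_2$ is the subtle spot, since a naive implementation that re-runs $\adv$ after guessing a different chain position would destroy the straight-line property and require quantum rewinding to lift. The standard textbook construction commits to the guess \emph{before} $\adv$ starts and absorbs the guessing factor into $\beta_2$, keeping the reduction straight-line and thereby within reach of Corollary~\ref{cor:sltoresp}. Conversely, because $\gforot$ delivers only a single classical message-signature pair, no quantum signing queries arise and no game-updating translation is needed, so the analysis lives entirely in the game-preserving regime of Section~\ref{ssec:gpr}.
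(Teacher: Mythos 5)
Your proposal follows essentially the same route as the paper: the same decomposition of the classical argument into a $\gprf\to\kow$ reduction and a $\kow\to\gforot_{\wots}$ reduction, with each verified to be straight-line, value-dominating and extendable so that Corollary~\ref{cor:sltoresp} lifts both to $\qpoly$, and the lifted reductions composed. The only (harmless) divergence is that you allow a polynomial guessing loss in $\beta_2$ where the paper asserts $\beta=1$ for both reductions; either way the negligible-advantage conclusion is unaffected, and your extra care about keeping the chain-position guess committed up front so the reduction stays straight-line is exactly the point the paper leaves implicit.
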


\mypar{$\xmss$-tree} The $\xmss$-tree modifies Merkle's hash-tree construction with an {\sc xor}-technique. Loosely speaking, each level of the tree is associated with two random strings, which mask the two children nodes before we apply the hash function to produce an authentication of a node. This tweak allows one to use a second-preimage resistant (\spr) hash function, instead of collision-resistant hash functions or $\uowhf$. Theoretically universal one-way implies second-preimage resistance. But in practice people typically test second-preimage resistance when a hash function is designed. Despite this change, the security proof is not much different. Reductions are given that convert a forger either to a forger for $\wots$ or to an adversary that breaks $\spr$-hash functions. They are straight-line, value-dominating and $(1,\qpoly)$-extendible. By Corollary~\ref{cor:sltoresp}, we have 

\begin{proposition}
$(\gspr, \secpar{\qpoly} = \negl(n))$ and $(\gforot, \secpar{\qpoly} = \negl(n))_{\wots}$ imply\\
 $(\gfor,\secpar{\qpoly} = \negl(n))_{\xmss}$.
Namely, assuming quantum-resistant $\prf$ and $\spr$ hash functions, $\xmss$ signature is \eucma-secure against quantum attackers $\qpoly$.
\label{prop:lxmss}
\end{proposition}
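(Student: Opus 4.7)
The plan is to lift the classical security proof of $\xmss$ using Corollary~\ref{cor:sltoresp} applied separately to each of the two classical reductions alluded to in the statement. Recall that in the classical analysis, a forger $\adv$ against $\xmss$ is turned into an attacker for one of two underlying primitives by a case-split: either $\adv$'s output contains a fresh $\wots$ signature on some node-key $pk_w$ that was not previously signed along the authentication path (a \emph{$\wots$-forgery} case), or the output implicitly exhibits a second preimage for the masked hash at some level (an \emph{$\spr$-break} case). A standard reduction $\trr_1=(\gforot_{\wots},\trr_1,\gfor_{\xmss})$ handles the first case by embedding the $\wots$ challenge public key at a random leaf/node (guessed among at most $\text{poly}(n)$ candidates) and simulating the remainder of the tree honestly; a second reduction $\trr_2=(\gspr,\trr_2,\gfor_{\xmss})$ handles the second case by embedding the $\spr$ challenge into a random node of the tree.

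Next I would check that each reduction satisfies the three hypotheses of Corollary~\ref{cor:sltoresp}. Both $\trr_1$ and $\trr_2$ are \emph{straight-line}: the simulator runs $\adv$ from start to finish exactly once, answering each signing query using locally sampled $\wots$ keys and pre-committed random masks, and only inspects $\adv$'s final output to extract the forgery (no rewinding, no restarts). Both are \emph{value-dominating}: their success probability is a fixed polynomial function of $\adv$'s forgery probability, depending only on $\gval{\gfor_{\xmss}}(\adv)$ and on the guess distribution, not on any other feature of $\adv$. Both are \emph{$(1/\text{poly}(n),\qpoly)$-extendible}: the underlying games $\gforot_{\wots},\gspr,\gfor_{\xmss}$ are all syntactically meaningful for quantum poly-time machines, and the classical analysis yields $\gval{\extg}(\trr_i^{\adv})\geq \beta_i\cdot \gval{\gfor_{\xmss}}(\adv)$ with $\beta_i=1/\text{poly}(n)$ independently of whether $\adv$ is classical or quantum.

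With these checks in place, Corollary~\ref{cor:sltoresp} lifts each $\trr_i$ to a quantum reduction $\qtrr_i$ with the same effectiveness $\beta_i$, i.e.\ $\gval{\extg_i}(\qtrr_i(\qadv))\geq \beta_i\cdot\gval{\gfor_{\xmss}}(\qadv)$ for every $\qadv\in\qpoly$. Now suppose for contradiction that some $\qadv\in\qpoly$ achieves $\gval{\gfor_{\xmss}}(\qadv)\geq 1/\text{poly}(n)$. By the classical case-analysis (which is information-theoretic and does not depend on the adversary being classical), at least one of the two forgery types occurs with probability $\geq \tfrac12 \gval{\gfor_{\xmss}}(\qadv)$; running the corresponding $\qtrr_i$ yields a $\qpoly$ adversary breaking either $\gforot_{\wots}$ or $\gspr$ with non-negligible probability, contradicting the two hypotheses of the proposition.

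The only non-routine point I anticipate is justifying the case-analysis step quantumly: we must argue that the event ``$\qadv$'s forgery falls into case 1 (resp.\ case 2)'' is a well-defined event in the distribution induced by the interaction, and that its probability is at least what the classical analysis guarantees. Since the event depends only on the classical bit-string output by $\qadv$ after the signing queries, and the masks/trees are generated exactly as in the classical game, this is essentially automatic; still, it is the one place where quantum-specific care is needed. Everything else is an instance of the straight-line/value-dominating template already covered by Corollary~\ref{cor:sltoresp}.
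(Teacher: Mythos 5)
Your proposal matches the paper's own (much terser) argument: identify the two classical reductions to $\gforot$ for $\wots$ and to $\gspr$, check that each is straight-line, value-dominating and extendible, and invoke Corollary~\ref{cor:sltoresp}. The only cosmetic difference is that the paper asserts the reductions are $(1,\qpoly)$-extendible whereas you take $\beta_i = 1/\mathrm{poly}(n)$ to account for the node-guessing and case-split loss; either suffices for the negligible-advantage conclusion, and your added remark that the case analysis concerns only the classical output string is a reasonable (and correct) elaboration the paper leaves implicit.
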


As mentioned above, $\uowhf$ are by definition second-preimage resistant. As a result, quantum-resistant $\spr$ hash functions can be constructed from  quantum-resistant $\owfs$ as well. Thus, we obtain that the $\xmss$ signature scheme is \eucma-secure against efficient quantum attackers $\qpoly$, assuming quantum-resistant \owfs. 


\subsection{Full-Domain Hash in Quantum Random-Oracle Model}
\label{ssec:appgu-fdh}
Full domain hash (FDH) is a generic approach of constructing signature schemes based on trapdoor permutations (TDPs) in the RO model~\cite{BR93}. The classical proof cleverly ``programs'' the random-oracle, so that a challenge of inverting a TDP gets embedded as one output value of the random-oracle. However when we consider FDH in the quantum random-oracle  (QRO) model, in which one can query the random-oracle in superposition, we lose the ``programable'' ability in the proof. Zhandry~\cite{Zha12ibe} resolved this issue by some quantum ``programing'' strategy, which built upon lower bounds on quantum query complexity. This is summarized as follows. 

\begin{theorem}[{\cite[Theorem 5.3]{Zha12ibe}}]
Let $F$ be a quantum-resistant trapdoor permutation. If we model $H$ as a quantum random-oracle, then $\Pi$ is quantum $\eucma$-secure.
\label{thm:fdh-qro}
\end{theorem}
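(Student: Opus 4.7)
The plan is to invoke the game-updating lifting theorem (Theorem~\ref{thm:ql-gur}) with external game $\extg=\gtdp$, classical internal game $\intg=\gforro$, and quantum internal game $\intgq=\gforqro$. On the classical side we would use the standard FDH reduction $\red=(\gtdp(\qpoly),\trr,\gforro(\qpoly))$, specialised to adversaries that issue only a single classical $H$-query: $\trr$ plants the inversion challenge $y^{*}$ on that unique query and returns the forged signature as a candidate preimage. This $\red$ is black-box straight-line and value-dominating, with classical effectiveness $\beta(\epsilon')=\epsilon'$ on this restricted class, so Theorem~\ref{thm:slr} together with trivial extendability shows that $\red$ is class-respectful on any quantum sub-class of such single-query adversaries.

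The heart of the argument will be the construction of the interpreter $\qintp$. Given a QPT forger $\qadv\in\qpoly$ for $\gforqro$ issuing at most $q_{H}$ quantum $H$-queries and $q_{S}$ classical sign queries, $\qintp$ will output $\qadv':=\qintp(\qadv)\in\qpoly$ that plays $\gforro$ as follows. After receiving the public key $F$, $\qadv'$ picks a fresh dummy message $m_{0}$, classically queries $H(m_{0})$ to obtain a value $z$, and samples two $2q_{H}$-wise independent objects: a $\lambda$-biased indicator $\mathbf{1}_{S}$ and a pseudo-preimage map $g$. It then simulates a quantum oracle $h$ for $\qadv$ by declaring $h(m)=z$ when $\mathbf{1}_{S}(m)=1$ and $h(m)=F(g(m))$ otherwise; since $F$ is a permutation, $F\circ g$ is still $2q_{H}$-wise independent, and by Zhandry's indistinguishability lemma~\cite{Zha12ibe} the simulated oracle is $q_{H}$-quantum-query indistinguishable from uniformly random on the complement of $S$. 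Sign queries from $\qadv$ on $m$ with $\mathbf{1}_{S}(m)=0$ are answered internally by $g(m)$; if $\mathbf{1}_{S}(m)=1$, $\qadv'$ aborts. When $\qadv$ returns a forgery $(m^{*},\sigma^{*})$ with $\mathbf{1}_{S}(m^{*})=1$, we have $F(\sigma^{*})=z=H(m_{0})$, so $\qadv'$ outputs $(m_{0},\sigma^{*})$ as its own forgery (with $m_{0}$ never signed by construction); otherwise it fails.

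Verifying the two translatability conditions should then be routine. First, the semi-constant distribution analysis of~\cite{Zha12ibe} bounds the advantage of $\qadv$ in distinguishing its simulated view from the true QRO view by $O(q_{H}^{3}\lambda^{2})$, while the total abort probability on sign queries is at most $q_{S}\lambda$. Hence $\gval{\gforro}(\qadv')\ge\lambda\cdot\gval{\gforqro}(\qadv)-O(q_{H}^{3}\lambda^{2})-q_{S}\lambda$, and optimising $\lambda=\Theta(\epsilon/q_{H}^{3})$ with $\epsilon:=\gval{\gforqro}(\qadv)$ gives $\beta'(\epsilon)=\Omega(\epsilon^{2}/q_{H}^{3})$. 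Second, because $\qadv'$ makes exactly one classical $H$-query, the classical reduction $\red$ is $\beta$-respectful on $\qintp(\qpoly)$ with $\beta(\epsilon')=\epsilon'$. Theorem~\ref{thm:ql-gur} then delivers a quantum reduction $\qred=(\gtdp(\qpoly),\qtrr,\gforqro(\qpoly))$ of $(\beta\cdot\beta')$-effectiveness $\Omega(\epsilon^{2}/q_{H}^{3})$, which is non-negligible whenever $\epsilon$ is. Quantum-resistance of $F$ therefore forces $\epsilon$ to be negligible and proves the theorem.

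The main obstacle, and the only place that requires genuine quantum reasoning, is the design of $\qintp$: the interpreter must simulate a quantum random oracle for $\qadv$ using only the classical interface of $\gforro$, in such a way that the landing set $S$ is quantum-invisible to $\qadv$ yet detectable enough at the forgery stage to route the TDP challenge through a single classical query. The semi-constant distribution technique of Zhandry is exactly this device, and the quantitative guarantees above are essentially his. What the present framework buys us is a clean separation: all the quantum work is confined to $\qintp$, while the bridge from forgery to TDP inverter is handled by the (now trivial) single-query classical FDH reduction lifted via Theorem~\ref{thm:slr}.
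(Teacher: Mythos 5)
Your proposal follows essentially the same route as the paper's proof in Appendix~\ref{apx:fdh}: the same semi-constant-distribution interpreter (your $\mathbf{1}_{S}$ and $g$ are the paper's $\calO_2$ and $\calO_1$, realized by $k$-wise independent functions), the same single-classical-query bridge $(a,b:=H(a))$ to the standard FDH reduction with $\beta=1$, and the same invocation of Theorem~\ref{thm:ql-gur} yielding a quadratic loss in the forger's advantage. The only quibble is your intermediate bound $\lambda\epsilon - O(q_H^{3}\lambda^{2}) - q_S\lambda$, which is vacuous as written (since $q_S\lambda \ge \lambda\epsilon$); the abort-on-sign-query events should be folded in multiplicatively, roughly as $\lambda(1-\lambda)^{q_S}\epsilon - O(q_H^{4}\lambda^{2})$ per Theorem~\ref{thm:scd}, after which your final $\Omega(\epsilon^{2}/\mathrm{poly}(q_H,q_S))$ effectiveness matches the paper's $\gval{\gforqro}^{2}(\qadv)/p(n)$.
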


We note that Zhandry's proof fits our framework for lifting game-updating reductions. Namely,  let $\gtdp$ the inversion game for a $\tdp$. We can construct an interpreter $\qintp$ for any adversary in the forgery game $\gforqro$, and show that the classical reduction $(\gtdp,\trr,\gforro)$ is translatable. Applying Theorem~\ref{thm:ql-gur} proves the theorem here. We describe a proof in Appendix~\ref{apx:fdh} for completeness. This illustrates how to apply our framework and get (in our opinion) more modular security analysis.

\subsection{Quantum Security of Classical Cryptographic Protocols}
\label{ssec:appgu-sim}

So far, we have been focusing on basic cryptographic primitives such as $\uowhf$ and signatures. However, our framework is not limited to these scenarios, and actually can be applied to analyzing more complicated cryptographic protocols as well.  
Specifically an abstraction called simple-hybrid arguments, which characterize a family of classical proofs for two-party secure computation protocols in the computational setting that go through against quantum adversaries~\cite{HSS11}, can be derived easily in our framework. We defer the details in Appendix~\ref{apx:sim}.

\section{Discussions}
\label{sec:disc}

We have proposed a general framework to study which security reductions are quantum-friendly. The lifting theorems we developed can be used to analyze security against computationally bounded quantum adversaries for post-quantum cryptography. As an application, we have shown the quantum security of a generic hash-tree based signature scheme and an efficient variant (which is a promising candidate for post-quantum signature schemes to be implemented in practice). 

However, this note concerns mostly the feasibility of lifting classical security proofs to the quantum setting, and there are many important aspects missing and many interesting directions to be investigated. 
	For example, we did not consider much about the ``quality'' of the resulting proofs for quantum adversaries. Say, can we preserve the tightness of the classical reduction when we lift it? Tightness of security reduction is of great practical impact. Not only it affects how to set the parameters in implementations, it may render security meaningless in some cases~\cite{CMS12}. Interestingly, there are also examples where we get tighter reduction in the quantum setting, as demonstrated in the quantum Goldreich-Levin theorem~\cite{AC02}. This is also a nice example of game-updating reductions beyond the QRO model. Along the same line, another game-updating reduction that is fundamental in cryptography arises from constructing a pseudorandom permutation (PRP) from a pseudorandom function (PRF). It is not clear if the classical construction remains valid if the game defining PRP allows superposition queries to distinguish it from a truly random permutation.	
	
	There are many concrete questions left for quantum-secure signature schemes as well. We showed a quantum \eucma-secure signature scheme based on quantum-resistant $\owfs$. Can we make it strongly-unforgeable?  The \xmss~scheme is also known to be \emph{forward}-secure. Is it still true against quantum adversaries? {We believe both answers are positive, by similar analysis from this note}. Moreover, there are generic transformations that augments a signature scheme with stronger security guarantees (e.g., from $\eucma$-secure to $\sucma$-secure). Do they hold in the quantum setting? We also note that the applications we have shown in the game-updating case are not very exciting in the sense that designing an interpreter appears no easier than coming up with a quantum reduction directly. It is helpful to further explore along this line to find more interesting applications. 
	
	Finally, we remark that quantum attacks could reduce the security level of a system, using for example Grover's quantum search algorithm. Although not covered in this note, this issue needs to be addressed with care as well.

\section*{Acknowledgments}
\label{sec:ack}
The author is grateful to Michele Mosca for encouraging him to write up this note. F.S. would like to thank the anonymous reviewers for valuable comments and John Schank for joyful discussions on lattice-based signature schemes.  F.S. acknowledges support
from Ontario Research Fund, Industry Canada and CryptoWorks21.

\ifnum\pqc=1
\bibliographystyle{splncs03}
\else
\bibliographystyle{alpha}
\fi
\bibliography{qfred}

\appendix

\section{(Alternative) Proof of Theorem~\ref{thm:fdh-qro}: FDH in QRO}
\label{apx:fdh}

We first review a technical tool in~\cite{Zha12ibe} called \emph{semi-constant} distribution. Loosely speaking, it allows us to ``program'' a function, which still looks like a random function even to a quantum observer. 

\begin{definition}[{Semi-Constant Distribution~\cite[Definition 4.1]{Zha12ibe}}]
Let $X$ and $Y$ be sets and denote $\calH_{X,Y}$ the set of functions from $X$ to $Y$. The semi-constant distribution $SC_\lambda$ is defined as the distribution over $\mathcal{H}_{X,Y}$ resulting from the following process: 
\begin{itemize}
	\item Pick a random element $y$ from $Y$.
	\item For each $x\in X$, set $H(x) = y$ wth probability $\lambda$. Otherwise set $H(x)$ to be a random element in $Y$.
\end{itemize}
\end{definition}

\begin{theorem}[{~\cite[Corollary 4.3]{Zha12ibe}}]
The distribution of the output of a quantum algorithm making $q_H$ queries to an oracle drawn from $SC_\lambda$ is at most a distance $\frac{8}{3}q_H^4\lambda^2$ away from the case when the oracle is drawn uniformly from $\calH_{X,Y}$.
\label{thm:scd}
\end{theorem}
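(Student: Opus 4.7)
The plan is to bound the output-distribution distance by exploiting the fact that $SC_\lambda$ agrees with the uniform distribution at every \emph{single} point, so any distinguishing advantage must come from degree-two or higher correlations across pairs of queries. In particular, I would not try to use a generic BBBV-style reprogramming bound, which would only give a $\sqrt{\lambda}$-rate and miss the crucial cancellation unique to the semi-constant structure.

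First I would couple the two distributions: sample a uniform function $H_0 \colon X \to Y$, a uniform $y \in Y$, and independent Bernoulli bits $\{b_x\}_{x \in X}$ with $\Pr[b_x = 1] = \lambda$, and define $H_\lambda(x) := y$ if $b_x = 1$ and $H_\lambda(x) := H_0(x)$ otherwise. Then $H_0$ is uniform and $H_\lambda \sim SC_\lambda$, and the task reduces to bounding the expected trace distance between the final states of $\mathcal{A}^{H_0}$ and $\mathcal{A}^{H_\lambda}$ over this joint sample space. The key structural observation is that for any fixed $x$, the marginal law of $H_\lambda(x)$, averaged over $y$ and $b_x$, is still uniform on $Y$; this ``first-moment cancellation'' is what lets the advantage scale as $\lambda^2$.

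Next I would invoke the polynomial representation of quantum query algorithms, adapted to oracles of large range as in Zhandry's treatment: for any fixed output $z$ and any fixed $(H_0, y)$, the quantity $\Pr[\mathcal{A}^{H_\lambda} \to z \mid H_0, y, \mathbf{b}]$ is a multilinear polynomial of degree at most $2q_H$ in the bits $\{b_x\}_x$. Taking expectations over $\mathbf{b}$, using $\mathbb{E}[\prod_{x \in T} b_x] = \lambda^{|T|}$, and then averaging over $(H_0, y)$, the degree-zero term reproduces the uniform-oracle output probability, the degree-one term cancels by the single-point marginal uniformity, and degree-$k$ contributions for $k \ge 2$ are controlled by $\binom{2q_H}{k}\, \lambda^k$. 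The dominant $k = 2$ contribution gives a per-output bound of order $q_H^2\lambda^2$; lifting this from a per-output bound to a trace-distance bound at the level of the density matrix via a Parseval/Cauchy--Schwarz step (as in the proof of Zhandry's Theorem~4.2) produces the claimed $\tfrac{8}{3}\, q_H^4 \lambda^2$, with the constant $\tfrac{8}{3}$ emerging from the precise binomial moment estimate.

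The main obstacle will be the last aggregation step: turning the per-output polynomial estimate into a total trace-distance bound without losing the first-order cancellation or picking up a dependence on $|Y|$. The right move is to treat the output density matrix itself as a polynomial-valued object in $\mathbf{b}$ and bound its variance directly; the degree-$2q_H$ input and the vanishing first-order term then combine in a second-moment argument to give the $\lambda^2$ scaling multiplied by an appropriate power of $q_H$, yielding exactly the stated $\tfrac{8}{3}\, q_H^4 \lambda^2$.
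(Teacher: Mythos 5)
A preliminary remark: this note does not actually prove the statement --- it is imported verbatim as \cite[Corollary~4.3]{Zha12ibe} --- so the comparison below is against Zhandry's proof rather than anything contained in the paper itself.

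You have correctly isolated the two ideas that make that proof work: (i) by the polynomial method, the probability that a $q_H$-query algorithm produces a fixed output depends on the oracle only through its $2q_H$-point marginals, so after averaging over $SC_\lambda$ it becomes a polynomial $p(\lambda)$ of degree at most $d=2q_H$; and (ii) the marginal of $SC_\lambda$ on any single point (indeed, conditioned on at most one point being marked) is exactly uniform, so the linear coefficient vanishes, $p'(0)=0$. Up to there your outline matches the actual argument, and the coupling you set up is a clean way to phrase it.

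The gap is in the quantitative step. You propose to bound the degree-$k$ contribution by $\binom{2q_H}{k}\lambda^k$ and to extract the constant $\tfrac{8}{3}$ from a ``binomial moment estimate'' followed by a Parseval/variance argument on the density matrix. This step would fail: a multilinear polynomial that takes values in $[0,1]$ on the Boolean cube can have monomial coefficients that are exponentially large in its degree (already $\prod_{x}(1-b_x)$ has level-$k$ coefficient mass $\binom{|X|}{k}$, and the relevant monomials range over all of $X$, not over some fixed set of $2q_H$ points), so nothing bounds the level-$k$ coefficient sum of $p(\lambda)$ by $\binom{2q_H}{k}$, and no second-moment computation recovers such a bound. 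What the proof actually uses is that $p(\lambda)$ is a real polynomial of degree $d$ taking values in $[0,1]$ for \emph{all} $\lambda\in[0,1]$; a Markov--Chebyshev extremal bound on the second derivative of such a polynomial, $\sup|p''|\le d^2(d^2-1)/3$ (up to the normalization of the interval), combined with Taylor expansion and $p'(0)=0$ gives $|p(\lambda)-p(0)|\le \tfrac{\lambda^2}{2}\cdot\tfrac{d^2(d^2-1)}{3}\le\tfrac{8}{3}q_H^4\lambda^2$ with $d=2q_H$ --- that, and not a binomial estimate, is where $\tfrac{8}{3}$ comes from. The same device also resolves the aggregation issue you flag at the end: one does not sum per-output bounds (which would cost a factor of $|Y|$), but applies the derivative bound directly to $P_S(\lambda)=\sum_{z\in S}p_z(\lambda)$ for the maximizing event $S$, which is again a $[0,1]$-valued polynomial of degree $2q_H$ with vanishing derivative at $0$. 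Without the bounded-polynomial derivative inequality your argument does not close.
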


We are now ready to give a proof for Theorem~\ref{thm:fdh-qro} using our framework for game-updating reductions. 

\begin{proof} Classically there is $\red=(\gtdp, \trr, \gforro)$ that inverts the TDP with a forger for the FDH-Sign scheme. We construct an interpreter $\qintp$ as follows (Fig.~\ref{cons:intp-fdh}), and show that $\red$ is $(\qclsa,\qclsb)$-translatable with $\qclsa = \qclsb = \qpoly$. 

Clearly, $\qadv'$ is a well-defined (quantum) adversary for the original forgery game $\gforro$ (i.e., the hash queries are classical). If $\qadv$ outputs a valid forgery $(m^*,\sigma^*)$ such that $\hat H(m^*) = f_{pk}(\sigma^*)$ and $\calO_2(m^*) =1$, we know that $\hat H(m^*) = b = H(a)$ and hence $(a,\sigma^*)$ forms a valid forgery in the classical forgery game. Note that the view of $\qadv$ in $\qadv'$ differs from a true interaction with a challenger in game $\gforqro$ in two places: a truly random oracle is replaced by $\hat H$ drawn from $SC_\lambda$ and the signing query fails with probability $\lambda$. By picking $\lambda$ a proper inverse polynomial in $q_H$ and $q_S$, we can obtain from Theorem~\ref{thm:scd} that $\gval{\gforro}(\qintp(\qadv)) \geq \gval{\gforqro}^2(\qadv)/{p(n)}$ for some polynomial $p(\cdot)$. Thus $(\gforro,\qintp,\gforqro)$ forms a $(\beta',\qpoly)$-effective reduction for a suitable $\beta'$. Since the two random oracles $(\calO_1, \calO_2)$ can be simulated efficiently by $k$-wise indecent functions (C.f.~\cite[Theorem 6.1]{Zha12ibe}), $\red$ is clearly $\beta$-$(\qpoly,\qintp(\qpoly))$-respectful with $\beta=1$. Therefore we obtain that $\red$ is $(\qpoly,\qpoly)$-translatable, which by Theorem~\ref{thm:ql-gur} can be lifted to a reduction $(\gtdp(\qpoly), \qtrr, \gforqro(\qpoly))$. This shows that the FDH-Signature scheme is quantum \eucma-secure, assuming quantum-resistant trapdoor permutations. 
\ftodo{define games: quantumRO sign.}

\begin{prot}{Interpreter}{$\qintp$}
{\sf Input}: Adversary $\qadv$ for a quantum \eucma-game.  Let $q_S$ and $a_H$ be upper bounds on the number of signing queries and hash queries of $\qadv$.

{\sf Output}: An adversary $\qadv':=\qintp(\qadv)$ that operates as follows: 
\begin{enumerate}
	\item Receive $pk$ from a challenger, which indexes a permeation $f_{pk}$. 
	\item Pick an arbitrary message $a$. Query $H(\cdot)$ and get $b:=H(a)$. 
	\item Emulate (internally) a quantum \eucma-game with $\qadv$. 
	\begin{itemize}
		\item Use $b$ to create an oracle $\hat H$ from a semi-constant distribution $SC_\lambda$ which handles (quantum) hash queries from $\qadv$.  Specifically, let $\calO_2$ be a random oracle outputting 1 with probability $\lambda$ and $\calO_1$ be a random oracle mapping a message to an input of $f_{pk}$. Let $\hat H (x) = b$ if $\calO_2(x) = 1$ and $\hat H(x) = f_{pk}(\calO_1(x))$ otherwise.   
		\item On signing query $m_i$, if $\calO_2(m_i) =1$ abort. Otherwise respond with $\sigma_i: = \calO_1(m_i)$. 
	\end{itemize} 
	\item On output $(m^*, \sigma^*)$ from $\qadv$, if $\calO_2(m^*) = 1$ output $(a, \sigma^*)$.	
\end{enumerate}
\caption{Construction of the Interpreter.}
\label{cons:intp-fdh}
\end{prot}

\end{proof}

\section{Details on Sect.~\ref{ssec:appgu-sim}}
\label{apx:sim}

Security definitions in this setting usually follows the \emph{simulation paradigm}. In particular, there is not a simple game capturing them\footnote{In some sense, the security definitions we discussed earlier that are specified by games are \emph{falsifiable}, which does not seem to be so here.}. Roughly speaking, we require the existence of an imaginary entity (called the simulator) with certain properties for any possible adversary. The main ingredient of a security proof is often a hybrid argument, in which a sequence of imaginary experiments (a.k.a. \emph{hybrids}) are defined in terms of an adversary and the simulator. The goal is to show each adjacent pair of hybrids is indistinguishable. Whenever this is done by a reduction of breaking a computational assumption, we can define a distinguishing game (as our internal game) and study if the reduction can be lifted using our framework.  

Consider zero-knowledge proof protocols as a concrete example. Zero-knowledge property requires that for any dishonest verifier $V^*$, there is a simulator $\calS$, such that the output of $\calS$ is indistinguishable from the view of $V^*$ in real protocol with honest prover. At this moment, it looks quite alien to our framework. However, once we start the security proof, it naturally fits our framework. Basically, if we fix a dishonest $V^*$, and a specific construction of a simulator, showing that the simulator works can be thought of as a distinguishing game. 

\begin{ncprot}{ZK Distinguishing Game}{$\zkdistg$} 
{\sf Two parties}: Challenger $\chr$ and distiguisher $\disr$. 
\begin{itemize}
	\item $\chr$ flips a random coin $b\in_R\{0,1\}$. If $b=0$ simulates an execution of the ZK protocol and sends $\disr$ the view of $V^*$. If $b=1$, run the simulator $\calS$ and sends $\disr$ the output of $\calS$.
	\item $\disr$ receives the message from $\chr$, generate one bit $b'$ and send it to $\chr$.
	\item $\chr$ outputs $\succ$ if $b=b'$ and $\fail$ otherwise. 
\end{itemize}
\end{ncprot}

The security proof will then proceed in the familiar fashion. Namely a reduction $(\extg, \trr, \intg: = \zkdistg)$ is constructed for some computational assumption captured by $\extg$. We can then ask if we can ``lift'' the reduction to the quantum setting. One subtlety, however, is that the distinguishing game is specific to $V^*$ and $\calS$. Because of issues like rewinding, we have to update the games. The challenge then lies in constructing a simulator $\hat \calS$ for any dishonest quantum verifier $\hat V^*$,  which gives the updated distinguishing game $\qzkdistg$ in the presence of quantum verifiers.  

Sometimes we end up in the simpler game-preserving case. 
A concrete example is an abstraction proposed in~\cite{HSS11}, called \emph{simple-hybrid arguments} (SHA). 

\mypar{Simple hybrid arguments} SHA formalizes a family of classical proofs that can go through against quantum adversaries in the computational UC model. The essence is a simple observation: if two adjacent hybrids only differs by a small change such as chaining the plaintext of an encryption, then quantum security immediately follows as long as computational assumptions are made  quantum-resistant. Using our framework, each adjacent pair of hybrid induce a distinguishing game $\intg$ that can be defined similarly to $\zkdistg$, and a classical reduction $\red:=(\extg, \trr, \intg)$ is already at hand for some computational assumption defined by $\extg$. The conditions in SHA, e.g., changing only the plaintext, ensure that $\red$ satisfy the definition of $(\qclsa,\qclsb)$-respectful reductions with $\qclsa = \qclsb = \qpoly$. As a result, these reductions can be lifted by Theorem~\ref{thm:ql-gpr}.   

\end{document}